\definecolor{ultramarine}{RGB}{63, 0, 255}
\definecolor{medblue}{RGB}{0, 0, 100}
\definecolor{panblue}{RGB}{0,24,150}
\definecolor{carmine}{RGB}{150, 0, 24}
\definecolor{gray}{RGB}{150, 150, 150}
\definecolor{textcolor}{HTML}{0A75A8}
\DeclarePairedDelimiter\ceil{\lceil}{\rceil}
\DeclarePairedDelimiter\parens{\lparen}{\rparen}
\DeclarePairedDelimiter\abs{\lvert}{\rvert}
\DeclarePairedDelimiter\braces{\lbrace}{\rbrace}
\DeclarePairedDelimiter\bracks{\lbrack}{\rbrack}
\newcommand{\brackets}[1]{\braces*{#1}}
\newcommand{\Tr}[1]{\operatorname{\mathsf{Tr}}\!\bracks[\big]{#1}}
\newcommand{\CH}[1]{\operatorname{\mathsf{ConvexHull}}\!\bracks[\big]{#1}}
\newcommand*{\dyad}[2]{ {\ket{#1}\hspace{-0.5ex}\bra{#2}}}
\mathchardef\mhyphen="2D
\newcommand{\sbell}[3]{\ensuremath{{(#1\mathclose{\mhyphen}#2\mathclose{\mhyphen}#3)}}}
\definecolor{purple}{RGB}{128,0,128}
\newcommand {\id}[0]{\ensuremath{\mathds{1}}}
\newcommand{\half}[1]{\nicefrac{#1}{2}}
\newcolumntype{R}{>{\raggedleft\arraybackslash}X}
\newcolumntype{C}{>{\centering\arraybackslash}X}
\newcolumntype{L}{>{\raggedright\arraybackslash}X}
\newcommand{\ceq}[1]{Eq.\,(\ref{#1})}
\newcommand{\fig}[1]{Fig.\,\ref{#1}}
\newcommand{\abslambda}{\ensuremath{|\lambda|}}
\newcommand{\abslambdastar}{\ensuremath{|\lambda^{\!\star}\!|}}
\newcommand{\QCN}{\ensuremath{\abs{\lambda^{\star}_{\mathcal{Q}}}_d}}
\newtheoremstyle{lemeq}% name
{}%\abovedisplayskip}% Space above
{}%\belowdisplayskip}% Space below
{}% Body font
{}% Indent amount
{\bfseries}% Theorem head font
{:}% Punctuation after theorem head
{\parindent}% Space after theorem head
{\thmname{#1} \thmnumber{\normalfont{#2}} \thmnote{\normalfont#3}}% Theorem head spec (can be left empty, meaning ‘normal’)
\theoremstyle{lemeq}
\newtheorem{theo}{Theorem}
\newtheorem{prop}{Proposition} %changed for PRA
\newtheorem{axiom}{Axiom} %changed for PRA
\newtheorem{cor}{Corollary} %changed for PRA
\newtheorem*{defin}{Definition}
\newtheorem{conj}{{\color{red}Conjecture}}
\Crefname{theo}{\textbf{Thm}.}{\textbf{Thms}.}
\Crefname{axiom}{\textbf{Ax}.}{\textbf{Axs}.}
\Crefname{prop}{\textbf{Prop}.}{\textbf{Props}.}
\Crefname{conj}{\textbf{Conj}.}{\textbf{Conjs}.}
\Crefname{lemma}{\textbf{Lem}.}{\textbf{Lems}.}
\Crefname{defin}{\textbf{Def}.}{\textbf{Defs}.}
\Crefname{cor}{\textbf{Cor}.}{\textbf{Cors}.}
\begin{document}

\title{Identifying Nonconvexity in the Sets of Limited-Dimension Quantum Correlations}

\author{John Matthew Donohue}
\email{jdonohue@uwaterloo.ca}
\affiliation{Institute for Quantum Computing and Department of Physics \& Astronomy, University of Waterloo, Waterloo, Ontario, Canada, N2L 3G1}

\author{Elie Wolfe}
\email{ewolfe@perimeterinstitute.ca}
\affiliation{Perimeter Institute for Theoretical Physics, Waterloo, Ontario, Canada, N2L 2Y5}

%\date{\currenttime \hspace{1mm}UTC\hspace{1.5mm}on\hspace{1.5mm}\today}
%\date{\today}
\pacs{03.65.Aa,03.65.Ta,03.65.Ud}

\begin{abstract}
Quantum theory is known to be nonlocal in the sense that separated parties can perform measurements on a shared quantum state to obtain correlated probability distributions which cannot be achieved if the parties share only classical randomness. Here we find that the set of distributions compatible with sharing quantum states subject to some sufficiently restricted dimension is neither convex nor a superset of the classical distributions. We examine the relationship between quantum distributions associated with a dimensional constraint and classical distributions associated with limited shared randomness. We prove that quantum correlations are convex for certain finite dimension in certain Bell scenarios and that they sometimes offer a dimensional advantage in realizing local distributions. We also consider if there exist Bell scenarios where the set of quantum correlations is never convex with finite dimensionality.
\end{abstract}
\maketitle

\section{Introduction}

A poignant illustration of the non-classicality of quantum mechanics is found in Bell scenarios~\cite{bell1964einstein}, where space-like-separated parties perform local measurements on a shared resource, as shown in \cref{fig:BellScenario}.  If the shared resource is an entangled quantum state, the conditional probability distribution resulting from the parties' measurements may be outside of the set of distributions achievable by sharing classical randomness or, equivalently, through a local hidden variable model (LHVM)~\cite{Brunner2013Bell,PopescuReviewNatureComm}.  Per the device-independent formalism, we refer to any conditional probability distribution $p[abc...|xyz...]$ as a ``box''~\cite{NoSigPolytope,PRUnit,NoSigPolytope2,GisinFramework2012,ScaraniNotes2,scarani2012device,BancalDIApproach,GPTrelatingDI}, where the parties' choices of measurements $x,y,z...$ are referred to as ``inputs'' and their measurement results $a,b,c...$ as ``outputs''.  In this framework the quantum nature of the resource is assessed without information regarding the internal mechanism of the box, but rather by determining whether or not the box belongs to the LHVM-compatible set of boxes.  Nonlocal boxes are those which cannot be implemented with classical shared randomness, and are highly valued as the resources which drive device-independent quantum communication and quantum cryptography~\cite{CryptoDIQKDNJP,masanesDIQKD,RenatoQKDNature}.

For any given Bell scenario, the set of LHVM-compatible boxes is known as the local polytope; it has a finite number of facets given by Bell inequalities~\cite{Brunner2013Bell}.  A box is defined as nonlocal if and only if it violates a Bell inequality.  By contrast, we refer to the set of boxes implementable by sharing any quantum state as the quantum elliptope, which is also convex~\cite{WernerWolfBellInequalities,FritzCombinatorialLong,KirchbergConjecture,TsirelsonProblemInteraction} but cannot be described in terms of linear inequalities~[\citealp{Avis2007}, see \citealp[Fig. 4]{Brunner2013Bell}].  Boxes inside the quantum elliptope have conditional probabilities of the form \begin{align}\label{eq:genericquantumbox}
\hspace{-1ex}p[abc...|xyz...]=\Tr{\rho \, \parens[\big]{\hat{A}_{a|x} \otimes \hat{B}_{b|y} \otimes \hat{C}_{c|z} \,...}},
\end{align} where the dimensions of $\rho$ and the local measurement operators $\hat{A}_{a|x},\hat{B}_{b|y},\!...$ are unconstrained.  To enforce no-signalling, i.e. to ensure that no information about the measurement choices of one party can be inferred by the measurement results of the other, the parties are assigned distinct Hilbert spaces~\cite{FritzCombinatorialLong,AlmostQuantum}.  The operational boundary conditions delineating the quantum elliptope are notoriously hard to pin down, although various approximations to the quantum elliptope have recently become available \cite{AlmostQuantum,NPA2007Short,NPA2008Long,NPAReview}.  The quantum elliptope strictly contains the local polytope; nevertheless boxes inside it cannot be used for nonlocal signalling~\cite{Brunner2013Bell}.

\begin{figure}[t!]
\includegraphics[width=1\columnwidth]{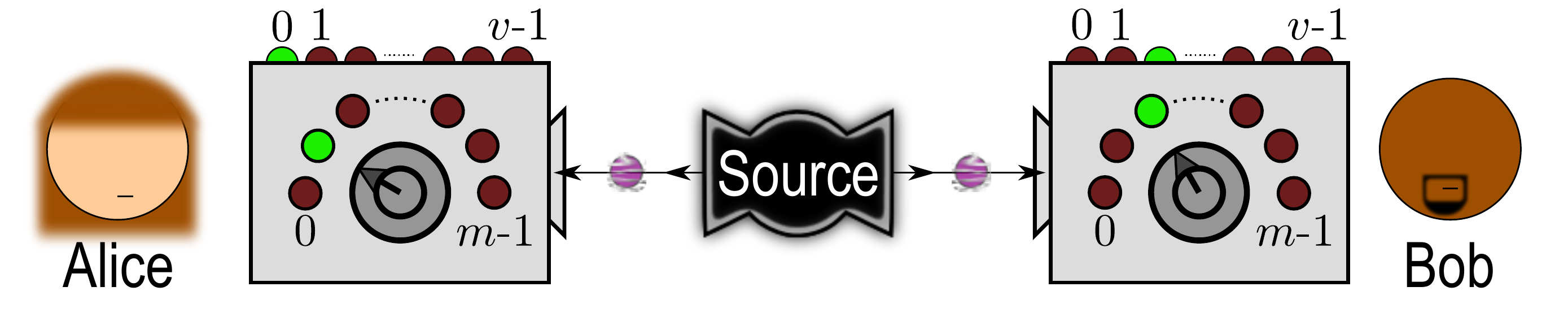}
\vspace{-6ex}
\caption{Bell scenarios may be viewed schematically as shown above, where $n$ parties ($n=2$ shown in this example) make synchronized measurements on systems prepared by a common source. Each party has $m$ input (measurement setting) choices and $v$ output (measurement result) possibilities. For $n=2$, the parties are conventionally named Alice and Bob. Inputting $x$ and $y$ will return outputs $a$ and $b$ for Alice and Bob, respectively, with probability $p[ab|xy]$.}\label{fig:BellScenario}\vspace{-3ex}
\end{figure}

In this letter, we show that convexity and containment of the local polytope are lost when the dimensions of the local Hilbert spaces are restricted. Although dimensionally constrained quantum correlations have already garnered considerable attention, most prior works have considered the parties as additionally having access to unlimited shared randomness. Examples include determining the degree to which Bell inequalities can be violated \cite{DimensionDependantBellViolation,DimensionDependantBellViolation2,TsirelsonBoundSVD}, assessing the security of a quantum cryptographic protocol \cite{Nqubit.qb,QKD.DD.Winter,PhysRevLett.95.080501}, and testing if a given probability distribution is achievable \cite{DimensionWitness2008Brunner,Navascues2013FiniteDimensions,Navascues2014FiniteDimensions}. The background assumption of unlimited shared randomness imposes convexity automatically, and obscures the fine-grained quantum-to-classical comparison which we seek.  In contrast, our approach considers purely quantum systems or systems with limited shared randomness.   Foundationally, the non-classical nature of quantum mechanics is richer in detail if we compare quantum preparations to classical preparations at finite sizes, as opposed to only in the asymptotic limits of the quantum elliptope and local polytope~\cite{GPTrelatingDI}. The aim of this Letter is to explore this finer characterization. 

Some important work has already been done in this framework. For example, \citet{ClassicalDimensionBoundNonConvex} recently re-analyzed prepare-and-measure signalling scenarios without the assumption of unlimited shared randomness~\cite{ClassicalDimensionBound2010}.  The minimal classical or quantum dimension required to achieve some \emph{unconditional} joint probability distribution, i.e. limiting each party to a single input, has recently been equated with the distribution's non-negative or positive semi-definite rank, respectively~\cite{PSDasQuantum,ReviewPSD}, and an arbitrary strong quantum dimensional advantage has been noted~\cite{ZhangQuantumGameTheory}.  For the usual multiple-inputs Bell scenario, the question of minimizing the quantum dimension required to implement a given box has recently been considered in Ref.~\cite{ConditionalMinimumHSD}, though see also Refs.~\cite{ClassicalLambdaRequirement,DimensionWitness2008Wehner}.  \citet{ConcavityPaperVertesi} notably called attention to the nonconvexity of some dimensionally constrained quantum correlations. We resolve the open question they posed by showing that sharing merely qubits leads to a non-convex set of correlations in \emph{any} non-trivial Bell scenario.  We then extend the discussion to multipartite and multichotomous measurements.  We prove that, in some scenarios, quantum correlations are guaranteed to be convex if the underlying Hilbert space dimension is sufficiently, but finitely, large.

\section{Notation, definitions, and fundamental hierarchies}\label{sec:notation}

We shall examine Bell scenarios where $n$ space-like separated parties choose from among $m$ inputs (measurement settings) each and, in response, observe among $v$ possible outcomes (measurement results),  as shown in \cref{fig:BellScenario}.  We label such Bell scenarios with the index \sbell{n}{m}{v}.  The \sbell{n}{2}{2} scenarios, for example, are those for which every party has access to two binary-outcome observables, and \sbell{2}{2}{2} is the familiar CHSH nonlocality scenario~\cite{CHSHOriginal,WolfeQB}.  For analytical clarity, we consider exclusively symmetric Bell scenarios.  As is conventional, for \sbell{2}{m}{v} the two parties are referred to as Alice and Bob. We use $x$ and $y$ to indicate Alice and Bob's respective apparatus choices (box inputs), and use $a$ and $b$ to indicate their respective measurement outcomes (box outputs), starting all indices from $0$.

\begin{figure*}[t!]
\centering
\subfigure[Sharing a multipartite quantum state.\hfill]
{\begin{minipage}[t]{.3\textwidth}
    \includegraphics[width=0.75\linewidth]{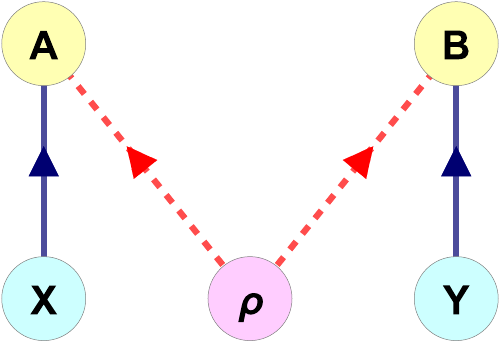}
    \label{fig:qDAG}
\end{minipage}}\hfill{\color{gray}\rule[-0.4cm]{0.5pt}{3cm}}\hfill
\subfigure[Sharing only classical randomness.\hfill]
{\begin{minipage}[t]{.3\textwidth}
    \includegraphics[width=0.75\linewidth]{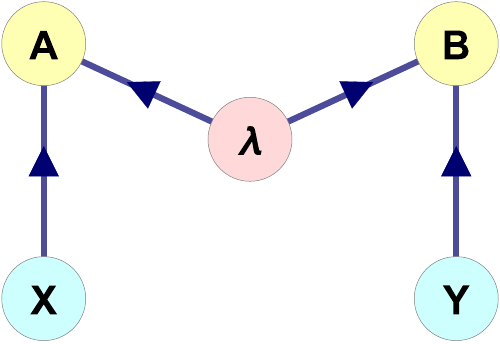}
    \label{fig:lDAG}
\end{minipage}}\hfill{\color{gray}\rule[-0.4cm]{0.5pt}{3cm}}\hfill
\subfigure[Sharing both resources.]
{\begin{minipage}[t]{.3\textwidth}
    \includegraphics[width=0.75\linewidth]{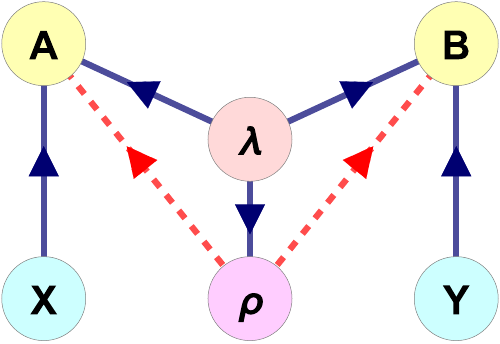}
    \label{fig:lqDAG}
\end{minipage}}
\vspace{-3ex}\caption{Various causal structures associated with two space-like separated parties who make synchronized measurements on systems prepared by a common source, as in \cref{fig:BellScenario}. The joint distribution on the observed outcomes differs depending on the nature of the shared resource. If the preparation is quantum, such as in \subref{fig:qDAG}, then $p[ab|xy]=\Tr{\rho \, \hat{A}_{a|x} \otimes \hat{B}_{b|y}}$, per \ceq{eq:genericquantumbox}. If the preparation is classical, such as in \subref{fig:lDAG}, then $p[ab|xy]=\sum_{\lambda}p[\lambda]p[a|x\lambda]p[b|y\lambda]$, per \ceq{eq:genericclassicalbox}. If the shared system is both classical and quantum, such as in \subref{fig:lqDAG}, then $p[ab|xy]=\sum_{\lambda}p[\lambda]\Tr{\rho^{(\lambda)} \, \hat{A}^{(\lambda)}_{a|x} \otimes \hat{B}^{(\lambda)}_{b|y} }$, per \ceq{eq:generichybridbox}. Hybrid preparations can equivalently be considered classical-quantum (cq) states \cite{CQstate2008,DimensionWitness2008Wehner}.
}
\label{fig:DAGs}\vspace{-2ex}
\end{figure*}

To avoid over-specifying a no-signalling box, we parameterize boxes such that only the minimum number of conditional probabilities are specified to allow full reconstruction of the distribution.  One such parameterization (inspired by \citet{ExplicitNoSigBasis}) is based on reserving-as-implicit all probabilities involving the outcome $0$.  For example, since ${\sum_{b=0}^{v-1}{p[ab|xy]=p[a|x]}}$ by no-signalling, one finds that ${p[a 0|xy]=p[a|x]-\sum_{b=1}^{v-1}{p[ab|xy]}}$.  In this parameterization scheme, each of the $\binom{n}{k}m^k$ possible $k$-partite input tuples one might condition upon has $(v-1)^k$ freely specifiable output probabilities. Consequently, any no-signalling box is specified in terms of precisely
\begin{align}\label{eq:dimcount}
\hspace{-2ex}\mathcal{F}=\sum_{k=1}^{n}{\binom{n}{k}m^k{(v-1)}^k}={{\left(m(v-1)+1\right)}^n-1}
\end{align}
total explicit parameters, regardless of the choice of parameterization scheme \cite{PironioDimension,Brunner2013Bell}.

For further simplification, we will only consider quantum measurement scenarios for which the local Hilbert spaces of every party have the same dimension\footnote{The uniform-dimension constraint is insignificant in bipartite scenarios; see \cref{sec:asymmdim} for details.}, $d$. Let's denote the set of quantum boxes achievable with such a dimensional constraint as $\mathcal{Q}_d^{\sbell{n}{m}{v}}$.  In this notation, therefore, the quantum elliptope is identically $\mathcal{Q}_{\infty}^{\sbell{n}{m}{v}}$.  In this set, the correlation between the parties arises entirely from measurements on a shared quantum state, as shown in \cref{fig:qDAG}.  We'll indicate $\mathcal{Q}_d^{\sbell{n}{m}{v}}$ via the shorthand $\mathcal{Q}_d$ when the scenario \sbell{n}{m}{v} is clear from context.  

It will also be important for us to consider the set of boxes achievable by sharing only a classical random variable $\lambda$ (as opposed to sharing a quantum state $\rho$), as seen in \cref{fig:lDAG}. In parallel with the notion of constrained local Hilbert space dimension, we let $\mathcal{L}$ denote the set of boxes achievable by sharing only a classical random variable $\lambda$ (as opposed to sharing a quantum state $\rho$).  The set of classical boxes achieved using \emph{constrained} shared randomness is analogously denoted $\mathcal{L}_{\abslambda}$, where $\abslambda$ refers to the dimension of the shared classical randomness, i.e. $\mathcal{L}_{\abslambda}$ indicates that the parties share no more than $\log_2\abslambda$ classical bits in common. For example, if the parties share the outcomes of rolling two distinguishable and arbitrarily weighted dice, then $\abslambda=36$. 

Boxes predicated on classical randomness are identically mixtures of product distributions\footnote{The notion of sharing classical randomness can be generalized to random variables with ``memory", but our analysis hews to the simplest model; see Ref. \citep[Sec. II.G]{Brunner2013Bell} for more details.}, i.e. $\mathcal{L}_{\abslambda}$ is the set of boxes for which
\begin{align}\label{eq:genericclassicalbox}
p[ab...|xy...]=\sum_{\lambda=0}^{\abslambda-1}p[\lambda]p[a|x\lambda]p[b|y\lambda]...
\end{align}
Accordingly, $\mathcal{L}_{\infty}$ may be thought of as the convex hull of $\mathcal{L}_{1}$, where $\mathcal{L}_{1}$ is the set of boxes achievable without actually sharing any randomness. Indeed, that $\mathcal{L}_{\infty}=\CH{\mathcal{L}_{1}}$ is a corollary of Fine's theorem~\cite{FineTheorem,SpekkensDeterminism,KunjwalFineTheorem}.  

While $\mathcal{L}_{\infty}=\CH{\mathcal{L}_{1}}$ follows from \cref{eq:genericclassicalbox}, one can nevertheless span the local polytope without requiring infinite shared randomness. We define the minimum amount of shared randomness which allows every possible local box to be implemented as $\abslambdastar$:
\begin{defin}[$\abslambdastar$]
The classical Carathéodory number $\abslambdastar$ is the minimum amount of classical randomness that must be shared in order to span the local polytope. Formally,
\begin{align}\label{eq:minsharedrandom}
\abslambdastar\equiv \min \abslambda \text{ s.t. }\mathcal{L}^{\sbell{n}{m}{v}}_{\infty}=\mathcal{L}^{\sbell{n}{m}{v}}_{\abslambda}
\end{align}
where $\abslambdastar$ intrinsically depends implicitly on \sbell{n}{m}{v}.
We call $\abslambdastar$  the classical Carathéodory number \cite{Caratheodory,CaratheodoryMathworld} of $\mathcal{L}_{1}$ because every $P\in\CH{\mathcal{L}_{1}}$ can be decomposed as a convex mixture of at-most $\abslambdastar$ boxes from $\mathcal{L}_1$, whereas $\abslambdastar-1$ would not be adequate.  Details regarding the determination of $\abslambdastar$, including the result that $\abslambdastar=4$ for the \sbell{2}{2}{2} scenario, can be found in \cref{sec:lambdastar}~\cite{StirlingS2Mathworld}.\end{defin}

Note that boxes arising from measurements on \emph{separable} quantum states are equivalent to boxes arising from shared classical randomness and vice versa. Critically, the set of all classical boxes with degree of shared classical randomness  $\abslambda$  is contained by the set of boxes implemented quantumly by sharing separable qudits with local Hilbert space dimension $d\geq\abslambda$. Formally,
\begin{axiom}\label{theo:LinsideQ}
$\mathcal{L}_{\abslambda}\subseteq{(\mathcal{Q}:\mbox{sep})}_{\mathlarger{d} \geq \abslambda},\,$ i.e. any box that can be constructed by sharing a classical dit can also be composed by sharing qudits.
\end{axiom}
\begin{cor}%[\hspace{-1ex}(\ref{theo:LinsideQ}a)]
\label{cor:PolytopeinsideQ}
$\mathcal{L}_{\infty}\subseteq\mathcal{Q}_{d}$ whenever $\abslambdastar\leq d$.
\end{cor}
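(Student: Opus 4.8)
The plan is to chain \cref{theo:LinsideQ} with the defining property of the classical Carath\'eodory number. First I would invoke the very definition of $\abslambdastar$ in \cref{eq:minsharedrandom}: the local polytope is reproduced exactly using only $\abslambdastar$ values of the shared variable, so $\mathcal{L}_{\infty}=\mathcal{L}_{\abslambdastar}$. Hence it suffices to show $\mathcal{L}_{\abslambdastar}\subseteq\mathcal{Q}_d$ for every $d\geq\abslambdastar$, which reduces the corollary to a statement about a \emph{finite} amount of shared randomness that \cref{theo:LinsideQ} already addresses.

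Next I would apply \cref{theo:LinsideQ} with $\abslambda=\abslambdastar$: since by hypothesis $d\geq\abslambdastar$, every box in $\mathcal{L}_{\abslambdastar}$ is realizable by sharing \emph{separable} qudits of local dimension $d$, i.e. $\mathcal{L}_{\abslambdastar}\subseteq(\mathcal{Q}\!:\!\mbox{sep})_{d}$. The only remaining step is the trivial inclusion $(\mathcal{Q}\!:\!\mbox{sep})_{d}\subseteq\mathcal{Q}_d$, which holds because a separable state of local dimension $d$ is a particular quantum state of local dimension $d$ while the measurement operators in \cref{eq:genericquantumbox} are left unrestricted. Concatenating, $\mathcal{L}_{\infty}=\mathcal{L}_{\abslambdastar}\subseteq(\mathcal{Q}\!:\!\mbox{sep})_{d}\subseteq\mathcal{Q}_{d}$ whenever $\abslambdastar\leq d$, which is exactly the claim.

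I do not expect a genuine obstacle here; the corollary is essentially an unpacking of \cref{theo:LinsideQ} together with \cref{eq:minsharedrandom}, so the ``hard part'' is only the bookkeeping above. The one point I would flag explicitly is that $\abslambdastar$ is finite, so that the hypothesis $\abslambdastar\leq d$ can be met at finite dimension at all: $\mathcal{L}_1$ has at most $v^{nm}$ extreme points (the deterministic local strategies), and Carath\'eodory's theorem applied within the $\mathcal{F}$-dimensional no-signalling space of \cref{eq:dimcount} already yields $\abslambdastar\leq\mathcal{F}+1$; the sharper evaluation $\abslambdastar=4$ for the \sbell{2}{2}{2} scenario is deferred to \cref{sec:lambdastar}.
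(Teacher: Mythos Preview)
Your proposal is correct and follows essentially the same approach as the paper: invoke $\mathcal{L}_{\infty}=\mathcal{L}_{\abslambdastar}$ from \cref{eq:minsharedrandom} and then apply \cref{theo:LinsideQ} with $\abslambda=\abslambdastar$ (the paper merely leaves the trivial inclusion $(\mathcal{Q}\!:\!\mbox{sep})_d\subseteq\mathcal{Q}_d$ implicit). Your added remark on the finiteness of $\abslambdastar$ is a harmless elaboration, not a departure.
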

\begin{cor}%[\hspace{-1ex}(\ref{theo:LinsideQ}b)]
\label{cor:SubLocalNonConvex}
If $\mathcal{L}_{\infty}\nsubseteq\mathcal{Q}_{d}\,$, then $\mathcal{Q}_{d}$ is not convex.
\end{cor}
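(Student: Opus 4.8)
The plan is to argue by contraposition, exploiting the fact already recorded above that $\mathcal{L}_{\infty}$ is precisely the convex hull of $\mathcal{L}_{1}$. The first step is to observe that $\mathcal{L}_{1}\subseteq\mathcal{Q}_{d}$ for every $d\geq 1$ and every scenario \sbell{n}{m}{v}: a box in $\mathcal{L}_{1}$ is a single product distribution $p[ab\ldots|xy\ldots]=p[a|x]p[b|y]\cdots$ carrying no shared randomness, which is the $\abslambda=1$ instance of \cref{theo:LinsideQ}; hence it is realizable by local measurements on a fixed (indeed one-dimensional) product state, and one-dimensional local systems are trivially admissible whenever $d\geq 1$. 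Thus $\mathcal{Q}_{d}$ always contains $\mathcal{L}_{1}$, irrespective of $d$.

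Next I would invoke Fine's theorem in the stated form $\mathcal{L}_{\infty}=\CH{\mathcal{L}_{1}}$, and suppose for contradiction that $\mathcal{Q}_{d}$ were convex. A convex set that contains $\mathcal{L}_{1}$ must also contain its convex hull, so $\mathcal{Q}_{d}\supseteq\CH{\mathcal{L}_{1}}=\mathcal{L}_{\infty}$, contradicting the hypothesis $\mathcal{L}_{\infty}\nsubseteq\mathcal{Q}_{d}$. Therefore $\mathcal{Q}_{d}$ cannot be convex. Equivalently and more explicitly: any witness box $P\in\mathcal{L}_{\infty}\setminus\mathcal{Q}_{d}$ is, by the definition of the convex hull, a finite convex mixture of boxes drawn from $\mathcal{L}_{1}\subseteq\mathcal{Q}_{d}$, so $P$ is a convex combination of elements of $\mathcal{Q}_{d}$ that itself lies outside $\mathcal{Q}_{d}$ — a direct certificate of nonconvexity.

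There is essentially no hard step in this corollary; the only point requiring the slightest care is the containment $\mathcal{L}_{1}\subseteq\mathcal{Q}_{d}$, which is immediate from \cref{theo:LinsideQ} specialized to $\abslambda=1$ (note this is the regime $\abslambda<\abslambdastar$, so it sits strictly below the sufficient condition of \cref{cor:PolytopeinsideQ}). The real work — supplying scenarios and dimensions $d$ for which the hypothesis $\mathcal{L}_{\infty}\nsubseteq\mathcal{Q}_{d}$ actually holds, so that the corollary has teeth — is deferred to the explicit constructions later in the Letter.
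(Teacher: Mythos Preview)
Your proof is correct and follows essentially the same route as the paper: both use $\mathcal{L}_{1}\subseteq\mathcal{Q}_{d}$ (the $\abslambda=1$ case of \cref{theo:LinsideQ}) together with $\mathcal{L}_{\infty}=\CH{\mathcal{L}_{1}}$ to conclude that a convex $\mathcal{Q}_{d}$ would have to contain the local polytope. Your version is simply more explicit about the contrapositive structure and the role of the containment $\mathcal{L}_{1}\subseteq\mathcal{Q}_{d}$.
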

%\end{samepage}
\begin{proof}
Every $P\in \mathcal{L}_{\abslambda}$ can be mapped to a $P^{\prime}\in {(\mathcal{Q}:\mbox{sep})}_{d=\abslambda}$ by the following construction on \cref{eq:genericquantumbox}: $\rho\to{\sum_{\lambda}^{\abslambda}p[\lambda]\parens*{\dyad{\lambda}{\lambda}}^{\otimes n}}$,  ${\hat{A}_{a|x}\to \sum_{\lambda}^{\abslambda}p[a|x\lambda]\dyad{\lambda}{\lambda}}$, and so on. Thus, for example, any box in $\mathcal{L}_{1}$ can be implemented quantumly by sharing a product state.  Note that the shared randomness required to enclose the separable quantum set of dimension $d$ is $\abslambda\geq d^n$, as shown in \cref{sec:compaxiom}~\cite{NSepStateMixing,SeparableEnsembleLength}. 

The corollaries follow since $\mathcal{L}_{\infty}=\mathcal{L}_{\abslambdastar}=\CH{\mathcal{L}_{1}}$.  It follows from \cref{eq:minsharedrandom,theo:LinsideQ} that the local polytope is contained by the quantum set with Hilbert space dimension $d\geq\abslambdastar$.  Since $\mathcal{L}_{\infty}=\CH{\mathcal{L}_{1}}$, we conclude that $\mathcal{Q}_{d}$ is not convex if it does not contain the local polytope, i.e. whenever ${\mathcal{L}_{\infty}\nsubseteq\mathcal{Q}_{d}}$.  Therefore, evidence of any LHVM boxes not achievable by sharing quantum states is evidence of nonconvexity.
\end{proof}

We conjecture that the shared quantum state must have local dimension $d\geq \abslambdastar$ in order to contain the local polytope, i.e. the converse of \cref{cor:PolytopeinsideQ}, that ${\mathcal{L}_{\infty}\nsubseteq\mathcal{Q}_{d<\abslambdastar}}$. For that matter, we conjecture that the converse of \cref{theo:LinsideQ} is also true, but this is a fundamental unproven open question.

\begin{conj}\label{conj:LnotsubsumedbysmallQ}
$\mathcal{L}_{\abslambda\leq\abslambdastar}\nsubseteq{\mathcal{Q}}_{\mathlarger{d} < \abslambda},\,$ i.e. the set of boxes which may be realized by sharing (possibly entangled) qudits of local Hilbert space dimension $d$ is conjectured to \emph{not} contain the set of boxes constructable by sharing a classical random variable of dimension $\abslambda\leq\abslambdastar$ if $d<\abslambda$.
\end{conj}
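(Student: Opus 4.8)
The plan is to split the conjecture into a family of sharp ``one-step'' dimension gaps and to attack each with a positive-semidefinite-rank obstruction. It suffices to establish, for every integer $k$ with $2\le k\le\abslambdastar$, that $\mathcal{L}_{k}\nsubseteq\mathcal{Q}_{k-1}$. Indeed a quantum model on local spaces $\mathbb{C}^{d}$ embeds into one on $\mathbb{C}^{d'}$ whenever $d\le d'$ (pad $\rho$ with zeros and absorb the residual identity into one POVM element of every setting), so $\mathcal{Q}_{d}\subseteq\mathcal{Q}_{d'}$; hence $\mathcal{L}_{k}\nsubseteq\mathcal{Q}_{k-1}$ forces $\mathcal{L}_{\abslambda}\nsubseteq\mathcal{Q}_{d}$ for every $d<\abslambda\le\abslambdastar$, upon setting $k=\abslambda$. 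The endpoint $k=\abslambdastar$ is precisely the converse of \cref{cor:PolytopeinsideQ}, and by \cref{cor:SubLocalNonConvex} it re-derives the nonconvexity of $\mathcal{Q}_{\abslambdastar-1}$ as a corollary.

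For the regime $v\ge k$ I would take as witness the ``total-correlation'' box $P_{k}[ab\cdots|xy\cdots]=\tfrac1k\sum_{\lambda=0}^{k-1}\delta_{a\lambda}\,\delta_{b\lambda}\cdots$, in which every party always outputs the same symbol drawn uniformly from $\{0,\dots,k-1\}$, independent of the inputs. This is literally the uniform mixture of the $k$ deterministic boxes ``all parties output $\lambda$'', so $P_{k}\in\mathcal{L}_{k}$. It is \emph{not} in $\mathcal{L}_{k-1}$: on any fixed input tuple each deterministic box contributes a single output point mass, whereas $P_{k}$ carries weight on $k$ distinct output tuples, so any local decomposition requires at least $k$ values of the shared variable --- which incidentally shows $\abslambdastar\ge k$, so this case lies entirely within $k\le\abslambdastar$.

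To exclude $P_{k}\in\mathcal{Q}_{k-1}$, suppose $p[ab\cdots|xy\cdots]=\Tr{\rho\,(\hat{A}_{a|x}\otimes\hat{B}_{b|y}\otimes\cdots)}$ with every local space equal to $\mathbb{C}^{k-1}$. Fixing all inputs to $0$ and marginalizing away parties $3,\dots,n$ (using $\sum_{c}\hat{C}_{c|0}=\id$) yields the \emph{unconditional} bipartite distribution $p[ab|00\cdots0]=\Tr{\rho'\,(\hat{A}_{a|0}\otimes\hat{B}_{b|0})}$ with $\rho'$ the reduced state of $\rho$ on the first two parties, realized on $\mathbb{C}^{k-1}\otimes\mathbb{C}^{k-1}$. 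By the quantum-dimension$\,\leftrightarrow\,$PSD-rank correspondence \cite{PSDasQuantum,ReviewPSD}, the positive-semidefinite rank of this marginal is then at most $k-1$. But for the witness the marginal is a scalar multiple of the $k\times k$ identity, whose PSD rank is $k$: if $A_{i},B_{j}\succeq0$ satisfy $\Tr{A_{i}B_{j}}=\delta_{ij}$, then $i\ne j$ forces $\mathrm{ran}\,A_{i}\perp\mathrm{ran}\,B_{j}$, and picking $v_{i}\in\mathrm{ran}\,B_{i}$ and $u_{i}\in\mathrm{ran}\,A_{i}$ with $\langle u_{i},v_{i}\rangle\ne0$ produces $k$ linearly independent vectors $v_{1},\dots,v_{k}$, so their ambient dimension is at least $k$. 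The contradiction $k\le k-1$ settles all $k\le v$.

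The hard part --- and the reason the statement is still a conjecture --- is the range $v<k\le\abslambdastar$, which is already nonempty for \sbell{2}{2}{2}, where $v=2<4=\abslambdastar$. There no single-input marginal can resolve more than $v$ outputs, so a witness must encode the $k$-valued hidden variable \emph{across} several inputs (say in base $v$, with each party's deterministic response reading one digit and enough of the $m$ settings to reach $k$), and the PSD-rank bound above must be strengthened to a genuine lower bound on the local Hilbert-space dimension needed to reproduce the \emph{entire} correlation table rather than a single slice. This is exactly the ``minimum Hilbert space dimension'' problem of \cite{ConditionalMinimumHSD}; I do not anticipate a clean closed form, and a realistic route is to certify scenario by scenario, through an SDP hierarchy in the spirit of \cite{NPA2007Short,NPA2008Long,Navascues2013FiniteDimensions}, that no dimension-$(k-1)$ model reproduces a suitable multi-input witness, and then to pin down the scaling via an explicit family. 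Conversely, a scenario in which some low-dimensional quantum encoding unexpectedly covers the whole local polytope would refute the conjecture, so resolving it requires controlling multi-input quantum dimension rather than only the single-input quantities used above.
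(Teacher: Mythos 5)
This statement is \cref{conj:LnotsubsumedbysmallQ}: the paper offers no proof of it and explicitly flags it as ``a fundamental unproven open question,'' so there is no in-paper argument to compare you against. What you have written is, by your own correct self-diagnosis, a partial result plus an accurate account of the obstruction, not a proof of the statement. The part you do prove --- $\mathcal{L}_{k}\nsubseteq\mathcal{Q}_{k-1}$ for $2\le k\le v$, via the input-independent perfectly-correlated uniform box and the fact that $\mathrm{rank}_{\mathrm{psd}}(I_k)=k$ --- is sound: the monotonicity $\mathcal{Q}_d\subseteq\mathcal{Q}_{d'}$ for $d\le d'$ holds, the nonnegative-rank argument for $P_k\notin\mathcal{L}_{k-1}$ holds, and the PSD-rank obstruction needs only the easy direction of the correspondence in \cite{PSDasQuantum}: for any (possibly mixed) $\rho$ on $\mathbb{C}^{k-1}\otimes\mathbb{C}^{k-1}$, setting $\sigma_a=\operatorname{Tr}_{A}[\rho\,(\hat{A}_{a|0}\otimes\id)]$ gives $p[ab|00]=\Tr{\sigma_a\,\hat{B}_{b|0}}$ with $(k-1)$-dimensional PSD factors, so the full equivalence of quantum dimension and PSD rank is not even required. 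This genuinely extends what the paper establishes: \cref{lem:Q1notconvex} uses the same witness but handles only $d=1$, and the $\sbell{2}{2}{2}$ instance $\mathcal{L}_3\nsubseteq\mathcal{Q}_2$ is obtained there only by brute-force computation (\fig{fig:Tri134}).

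The gap is exactly where you place it, and it is the entire content of the conjecture in the paradigmatic cases: for $v<\abslambda\le\abslambdastar$ no single-input marginal can resolve an $\abslambda$-valued hidden variable, so the PSD rank of any one slice of the correlation table is at most $v$ and your obstruction is vacuous; the witness must be genuinely multi-input, and no lower bound on the local Hilbert-space dimension of the required strength is currently known. Concretely, in $\sbell{2}{2}{2}$ your method delivers only $\mathcal{L}_2\nsubseteq\mathcal{Q}_1$, whereas the conjecture there asserts $\mathcal{L}_3\nsubseteq\mathcal{Q}_2$ (known in the paper only numerically) and $\mathcal{L}_4\nsubseteq\mathcal{Q}_3$ (open; cf.\ the Discussion's question of whether qutrits span the local polytope). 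So: a correct and worthwhile partial proof with an honest delimitation of its reach, but the statement itself remains unproven --- as the paper itself concedes.
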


\begin{figure*}[!t]
\centering
\subfigure[If $\;\widetilde{P}=c_1P_1\!+c_3P_3\!+c_4P_4,$%\hfill{\color{white}.}
\newline then $\;\widetilde{P}\in\mathcal{L}_{\infty}\;$ iff $\;c_4\leq 1-c_3,$
\newline and $\;\,\,\widetilde{P}\in\mathcal{Q}_2\;\;$ iff $\;\brackets{c_1 \text{ or } c_3\text{ or }c_4}=0.$
]{\begin{minipage}[t]{.3\textwidth}
\includegraphics[width=0.8\linewidth]{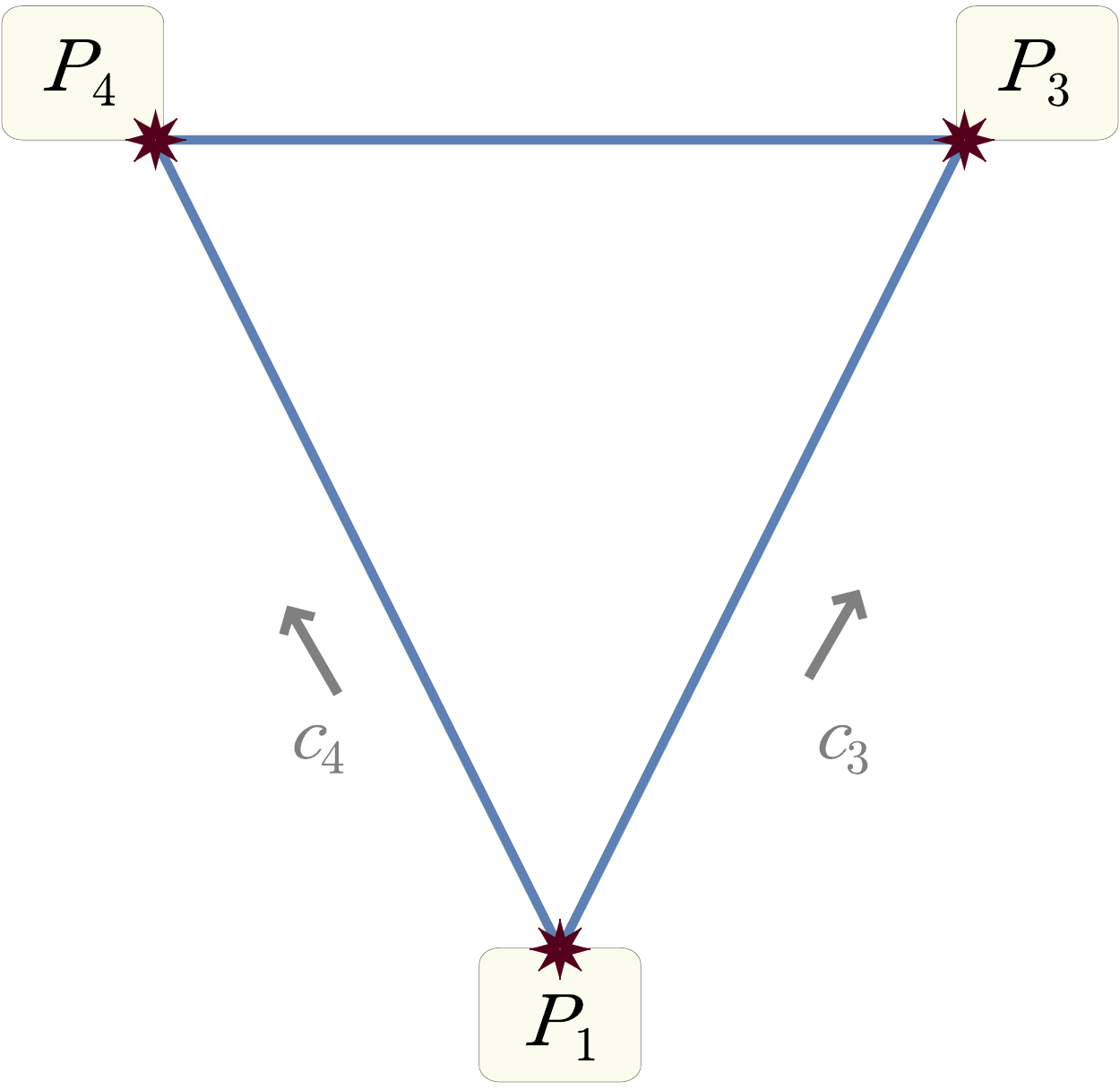}\label{fig:Tri134}\vspace{1ex}
\end{minipage}}
\hfill{\color{gray}\rule[-0.75cm]{0.5pt}{5.5cm}}\hfill
\subfigure[If $\;\widetilde{P}=c_0P_0\!+c_1P_1\!+c_{3:4}P_{3:4},$%\hfill{\color{white}.}
\newline then $\;\widetilde{P}\in\mathcal{L}_{\infty}\;$ iff $\;c_{3:4}\leq 1-c_{1},$
\newline and $\;\,\,\widetilde{P}\in\mathcal{Q}_2\;\;$ iff $\;c_{3:4}\leq {\left(1-c_{1}\right)}^{\nicefrac{3}{2}}.$
]{\begin{minipage}[t]{.3\textwidth}
\includegraphics[width=0.8\linewidth]{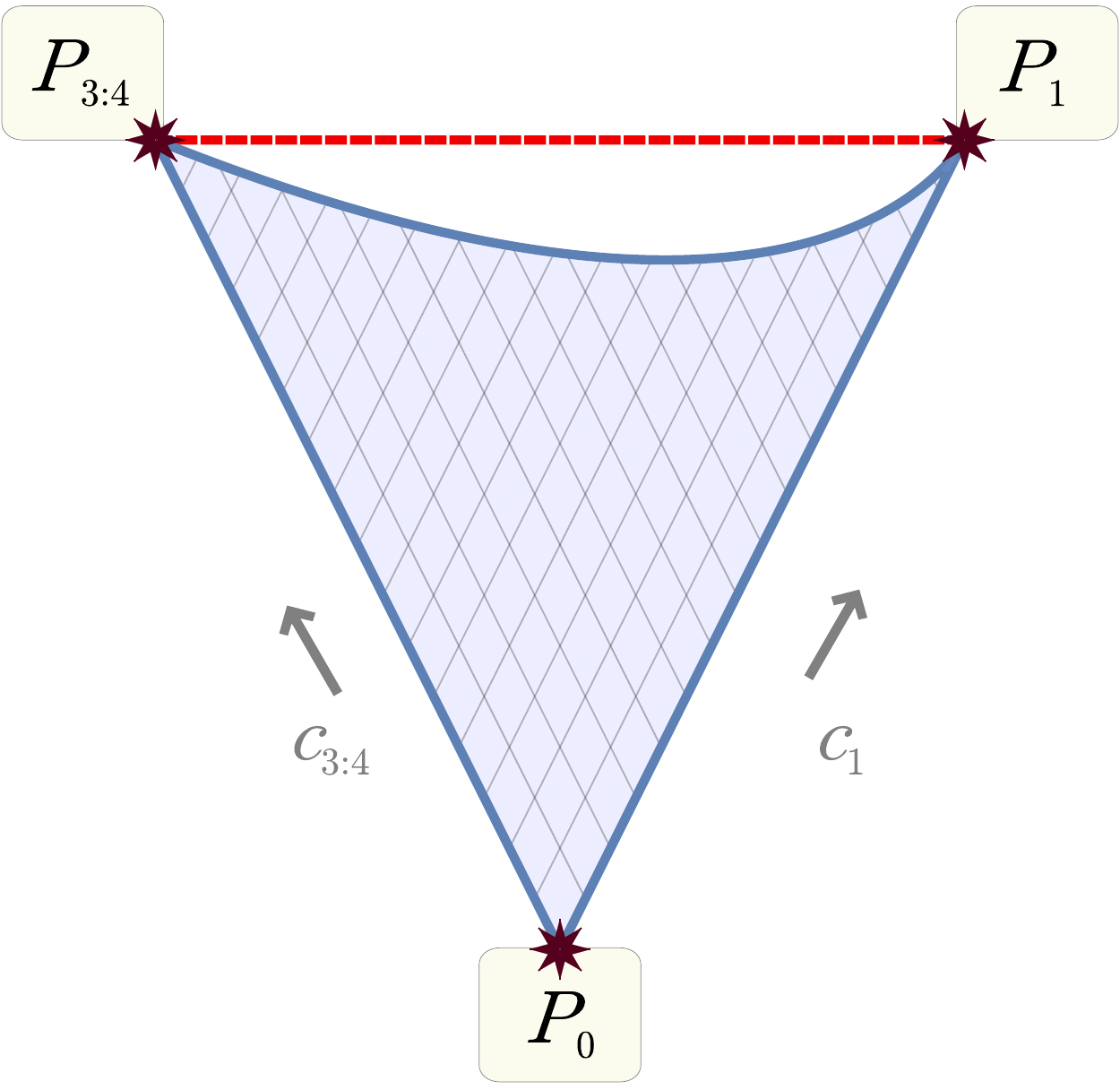}\label{fig:Tri0134}\vspace{1ex}
\end{minipage}}
\hfill{\color{gray}\rule[-0.75cm]{0.5pt}{5.5cm}}\hfill
\subfigure[If $\;\widetilde{P}=c_{0}P_{0}\!+c_{1}P_{1}\!+c_{\text{TB}}P_{\text{TB}},$%\hfill{\color{white}.}
\newline then $\;\widetilde{P}\in\mathcal{L}_{\infty}\;$ iff $\;c_{\text{TB}}\leq 2^{\half{-1}}\left(1-c_{1}\right)\!,$
\newline and $\;\,\,\widetilde{P}\in\mathcal{Q}_2\;\;$ iff $\;c_{\text{TB}}\leq {\left(1-c_{1}\right)}^{\nicefrac{5}{4}}.$
]{
\begin{minipage}[t]{.3\textwidth}
\includegraphics[width=0.8\linewidth]{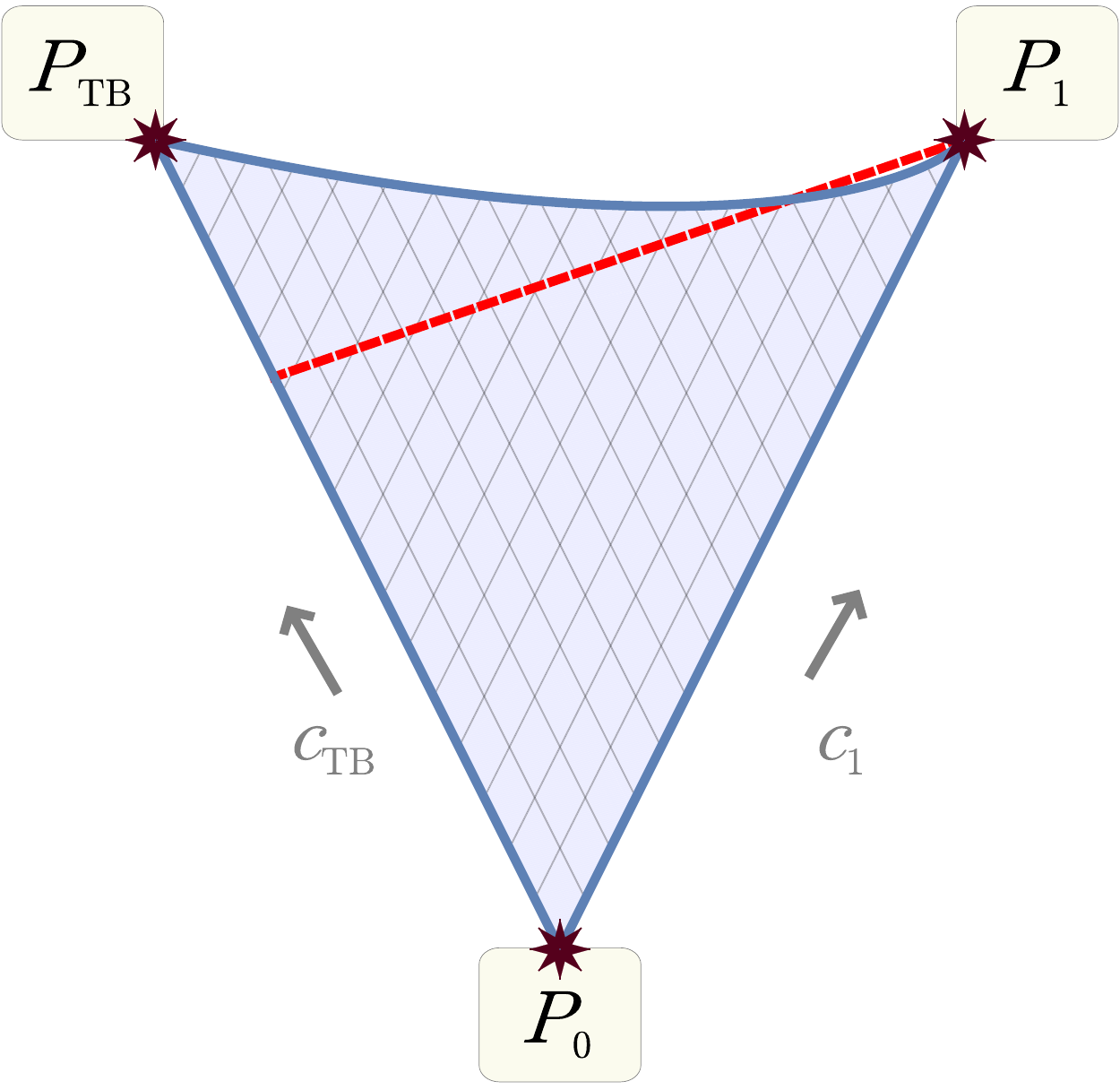}\label{fig:Tri01TB}\vspace{1ex}
\end{minipage}}
\vspace{-2ex}\caption{Three examples of the nonconvexity of the set $\mathcal{Q}_{2}$ are shown above. Each triangle represents the set of possible convex combination of its vertex boxes as defined in \cref{tab:local4}, where $c_i$ represents the weight of $P_i$ in the mixture and $\sum_{i}{c_i}=1$.  To emphasize the notion of convex combination we have plotted the regions as equilateral triangles; nevertheless any pair of edges should be thought of as two independent axes such that the third weight is fixed by normalization.  Although in each triangle all the vertex boxes are achievable with shared-qubits preparations, only a fraction of possible convex combinations are also achievable, as indicated by the shaded region. Indeed, in \subref{fig:Tri134} only the zero-area edges of the triangle are qubit-achievable. The local polytope $\mathcal{L}_{\infty}$ contains the entire triangular regions of \subref{fig:Tri134} and \subref{fig:Tri0134} since the vertex boxes of those triangle are all local, and in \subref{fig:Tri01TB} only the region below the dashed red line is consistent with classical shared randomness. The essential feature of all three triangles is that the shaded subset representing $\mathcal{Q}_{2}$ is not convex and that $\mathcal{L}_{\infty}\nsubseteq\mathcal{Q}_{2}$. \qquad The $\mathcal{Q}_{2}$ region in \subref{fig:Tri01TB} was obtained via computational maximization; ${\left(1-c_1\right)}^{\nicefrac{5}{4}}$ was noticed to precisely coincide with the numeric boundary. The given boundaries of $\mathcal{Q}_{2}$ in \subref{fig:Tri134} and \subref{fig:Tri0134} were recovered analytically. \qquad  Consistent with Ref. \cite{ConcavityPaperVertesi}, the qubit-accessible region shrinks if one restricts to projective measurements. The edges where $c_1\neq 0$ in \subref{fig:Tri134}, for example, are not in $\mathcal{Q}_2$-PVM, and in \subref{fig:Tri01TB} we find that  $\widetilde{P}\in\mathcal{Q}_2$-PVM only if $c_{\text{TB}}\lesssim {\left(1-c_{1}\right)}^{\nicefrac{3}{2}}$. Thus the use of POVMs can be certified in a device-independent way given a dimension promise, akin the the results of Ref. \cite{VBInequality}.
}\label{fig:triangles}\vspace{-2ex}
\end{figure*}

Note that \cref{conj:LnotsubsumedbysmallQ} would imply the nonconvexity of $\mathcal{Q}_d$ whenever $d<\abslambdastar$, per \cref{cor:SubLocalNonConvex}. One can use entropic measures to infer a trivial lower bound on the smallest $d$ for which $\mathcal{L}_{\abslambda}\subseteq\mathcal{Q}_{d}$ by comparing the largest mutual information which can be mediated through the given $\abslambda$ to the maximum total correlation capacity of quantum states of dimension $d$ \citep[Eq. (16)]{PopescuTotalCorrelation}.

\section{Non-convex sets of quantum correlations}

%TABLE I
\begin{table}[!b]
\vspace{-2.5ex}
\centering
\begin{tabularx}{\linewidth}{|C|CCCCcccc|}
\toprule
  & \(\Braket{A_0}\) & \(\Braket{A_1}\) & \(\Braket{B_0}\) & \(\Braket{B_1}\) & \(\Braket{A_0 B_0}\) & \(\Braket{A_1 B_0}\) & \(\Braket{A_0 B_1}\) & \(\Braket{A_1 B_1}\) \\
 \midrule
 \(P_0\) & 0 & 0 & 0 & 0 & 0 & 0 & 0 & 0 \\
 \(P_1\) &  1 &  1 &  1 &  1 & 1 &  1 &  1 & 1 \\
 \(P_2\) & -1 & -1 & -1 & -1 & 1 &  1 &  1 & 1 \\
 \(P_3\) &  1 & -1 &  1 & -1 & 1 & -1 & -1 & 1 \\
 \(P_4\) & -1 &  1 & -1 &  1 & 1 & -1 & -1 & 1 \\
\midrule
 \(P_{3:4}\) & 0 & 0 & 0 & 0 & 1 & -1 & -1 & 1 \\
  \(P_{1:4}\) & 0 & 0 & 0 & 0 & 1 & 0 & 0 & 1 \\
\midrule
 \(P_{\text{TB}}\) & 0 & 0 & 0 & 0 & \(2^{\half{-1}}\) & \(2^{\half{-1}}\) & \(2^{\half{-1}}\) & \(-2^{\half{-1}}\) \\
 \bottomrule
\end{tabularx}\vspace{-1ex}
\caption{\label{tab:local4}A list of bipartite conditional probability distributions, or boxes, in the \sbell{2}{2}{2} scenario. Per convention~\cite{Brunner2013Bell}, we parameterize binary-output boxes in terms of outcome biases, i.e. $\Braket{A_x}=p[a\mathopen{=}1|x]-p[a\mathopen{=}0|x]$ and $\Braket{A_xB_y}=p[a\mathopen{=}b|xy]-p[a\mathopen{\neq}b|xy]$. Note that the five boxes $P_0$ through $P_4$ are product distributions, achievable even absent any shared randomness, i.e. $\abslambda=1$. \(P_{i:j}\), by contrast, indicates the equally-weighted mixture of boxes \(P_i\) through \(P_j\), requires non-trivial shared randomness. Only \(P_{\text{TB}}\), the quantum box which achieves the Tsirelson bound \cite{Tsirelson1980,WolfeQB}, is nonlocal. Every box in this table is achievable with qubits, but many mixtures of these boxes \emph{cannot} be achieved using qubits, per \fig{fig:triangles}.
\vspace{-1ex}
}
\end{table}

With the above definitions in place, we next demonstrate the nonconvexity of the set of quantum distributions explicitly.  For this purpose, it suffices to focus on the most trivial example imaginable: a one-dimensional quantum system, ergo the set $\mathcal{Q}_{1}$.  

\begin{prop}\label{lem:Q1notconvex}
$\mathcal{Q}_{1}$ is not convex.
\end{prop}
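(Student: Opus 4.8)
The plan is to first identify $\mathcal{Q}_{1}$ explicitly and then observe that it fails convexity for the most elementary reason available: factorized distributions cannot carry correlations. A one-dimensional local Hilbert space is $\mathcal{H}\cong\mathbb{C}$, on which the only density operator is the scalar $\rho=1$ and each POVM element $\hat{A}_{a|x}$ is merely a nonnegative number with $\sum_{a}\hat{A}_{a|x}=1$; since $\mathbb{C}^{\otimes n}\cong\mathbb{C}$, \cref{eq:genericquantumbox} degenerates to $p[ab\ldots|xy\ldots]=p[a|x]\,p[b|y]\cdots$. Hence $\mathcal{Q}_{1}$ coincides with $\mathcal{L}_{1}$, the set of product boxes of \cref{eq:genericclassicalbox} with $\abslambda=1$.

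Second, I would exhibit two boxes in $\mathcal{Q}_{1}$ whose equal mixture escapes $\mathcal{Q}_{1}$. Working in the \sbell{2}{2}{2} scenario and drawing on \cref{tab:local4}, take $P_{1}$ (every party deterministically outputs $1$) and $P_{2}$ (every party deterministically outputs $0$); both are manifestly product boxes, hence in $\mathcal{Q}_{1}$. Their equal mixture $P_{1:2}\equiv\frac{1}{2}(P_{1}+P_{2})$ has vanishing single-party biases, $\Braket{A_{x}}=\Braket{B_{y}}=0$, but perfect correlation, $\Braket{A_{x}B_{y}}=1$. However, every product box satisfies $\Braket{A_{x}B_{y}}=\Braket{A_{x}}\Braket{B_{y}}$, so a product box with vanishing marginals must have vanishing correlators; therefore $P_{1:2}\notin\mathcal{L}_{1}=\mathcal{Q}_{1}$, and $\mathcal{Q}_{1}$ is not convex. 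Equivalently, since $P_{1:2}\in\mathcal{L}_{\infty}$ while $P_{1:2}\notin\mathcal{Q}_{1}$, one has $\mathcal{L}_{\infty}\nsubseteq\mathcal{Q}_{1}$ and nonconvexity follows at once from \cref{cor:SubLocalNonConvex}.

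There is essentially no obstacle here; the only points requiring care are the identification $\mathcal{Q}_{1}=\mathcal{L}_{1}$ (i.e.\ that a trivial Hilbert space genuinely supplies no correlating resource) and the elementary fact that factorized distributions have factorized correlators. The same argument runs verbatim in any Bell scenario with $n\geq 2$ parties and $v\geq 2$ outcomes, so \cref{lem:Q1notconvex} should be read as asserting nonconvexity of $\mathcal{Q}_{1}$ in every nontrivial scenario; note that for $n=1$ the set $\mathcal{Q}_{1}$ is simply the convex set of all response distributions $\{p[a|x]\}$, so the restriction to nontrivial scenarios is necessary.
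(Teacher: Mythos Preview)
Your proof is correct and follows essentially the same approach as the paper: identify $\mathcal{Q}_{1}$ with the set of product distributions, exhibit a correlated local box that cannot factorize, and invoke \cref{cor:SubLocalNonConvex}. Your counterexample $P_{1:2}$ is precisely the $v=2$ instance of the paper's general-$v$ construction (both parties output a shared uniform $\lambda\in\{0,\dots,v-1\}$ regardless of input); the paper phrases the non-factorization check as $p[\lambda\lambda|xy]\neq p[\lambda|x]\,p[\lambda|y]$ while you phrase it via $\Braket{A_xB_y}\neq\Braket{A_x}\Braket{B_y}$, but these are equivalent.
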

\begin{proof}

If ${d=1}$ then the only possible quantum state that Alice and Bob can ``share" is the product state $\ket{0}\bra{0}_A\otimes\ket{0}\bra{0}_B$.  No matter how Alice and Bob choose their POVM elements, their joint probability distributions will always factorize to a product distribution, $p[ab|xy]=\Tr{\hat{A}_{a|x}}\Tr{\hat{B}_{b|y}}=p[a|x]\times p[b|y]$.  Indeed, $\mathcal{Q}_{1}$ is the set of all product distributions, equivalent to the set the of boxes achievable with any $d$ subject to constraining $\rho$ to be a product state. As such, $\rho$ isn't really ``shared"
at all; the only randomness is local noise.  These restrictions are certainly limiting, but because all local deterministic boxes are achievable in $\mathcal{Q}_{1}$, we thus have $\mathcal{L}_{\infty}\subseteq\CH{\mathcal{Q}_{1}}$.

%In this trivial example, there effectively isn't any shared resource at all and the only randomness is local noise. Indeed, if ${d=1}$ then the only possible quantum state that Alice and Bob can ``share" is the product state $\ket{0}\bra{0}_A\otimes\ket{0}\bra{0}_B$. No matter how Alice and Bob choose their POVM elements, their joint probability distributions will always factorize to a product distribution, $p[ab|xy]=\Tr{\hat{A}_{a|x}}\Tr{\hat{B}_{b|y}}=p[a|x]\times p[b|y]$.  These restrictions are certainly limiting, but because all local deterministic boxes are achievable in $\mathcal{Q}_{1}$, we thus have $\mathcal{L}_{\infty}\subseteq\CH{\mathcal{Q}_{1}}$.

Suppose Alice and Bob both have access to a uniformly-distributed variable $0\leq\lambda\leq v-1$, and let their marginal probabilistic dependencies on $\lambda$ be such that $p[a|x\lambda]\to\delta[a=\lambda]$ and $p[b|y\lambda]\to\delta[b=\lambda]$ regardless of $x$ or $y$. The resulting box $P\in\mathcal{L}_{v}$ is such that ${p[\lambda\lambda|xy]=p[\lambda|x]=p[\lambda|y]=v^{-1}}$ where $P$ does not factorize, i.e. $p[\lambda\lambda|xy]\neq p[\lambda|x]\times p[\lambda|y]$. As such, $P\in\mathcal{L}_{\infty}$ yet $P\not\in\mathcal{Q}_{1}$. By \cref{cor:SubLocalNonConvex}, then, $\mathcal{Q}_{1}$ is not convex.
\end{proof}

In a less trivial example, we also show that nonconvexity persists in the set $\mathcal{Q}_{2}$ when the number of inputs $m\geq2$.  In Ref.~\cite{ConcavityPaperVertesi} the nonconvexity of $\mathcal{Q}_{2}$ was also proven, but only for scenarios with $m\geq4$.

\begin{prop}\label{lem:Q2notconvex}
$\mathcal{Q}_{2}$ is not convex (for all $m\geq 2$). \\ For example, in the \sbell{2}{2}{2} scenario, the set of boxes that can be achieved by sharing qubits is nonconvex.
\end{prop}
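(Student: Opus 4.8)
The plan is to lean on \cref{cor:SubLocalNonConvex}: since every vertex box $P_0,\dots,P_4$ of \cref{tab:local4} is a product distribution and hence already lies in $\mathcal{Q}_2$, it suffices to exhibit a single box inside the local polytope $\mathcal{L}_\infty$ but outside $\mathcal{Q}_2$. For the CHSH scenario I would use the interior point $\widetilde P\equiv\tfrac13\parens{P_1+P_3+P_4}$ of the triangle in \cref{fig:Tri134}; in the notation of \cref{tab:local4} it has biases $\Braket{A_x}=\Braket{B_y}=\tfrac13$, perfect correlators $\Braket{A_0B_0}=\Braket{A_1B_1}=1$, and $\Braket{A_0B_1}=\Braket{A_1B_0}=-\tfrac13$. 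Being a convex mixture of three deterministic local boxes, $\widetilde P\in\mathcal{L}_\infty$ automatically, so everything reduces to showing that no two-qubit model reproduces $\widetilde P$.

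Suppose one did: a state $\rho$ on $\mathbb{C}^2\otimes\mathbb{C}^2$ together with two-outcome POVMs, and set $A_x\equiv\hat A_{1|x}-\hat A_{0|x}$ and $B_y\equiv\hat B_{1|y}-\hat B_{0|y}$, each self-adjoint with operator norm at most $1$, so that $\Braket{A_x}=\Tr{\rho\,(A_x\otimes\id)}$ and $\Braket{A_xB_y}=\Tr{\rho\,(A_x\otimes B_y)}$. The lever is the perfect correlator $\Braket{A_0B_0}=\Tr{\rho\,(A_0\otimes B_0)}=1$: since $A_0\otimes B_0$ has operator norm at most $1$, the state $\rho$ must be supported on its eigenvalue-$1$ eigenspace. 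A short case analysis on the spectra of $A_0$ and $B_0$ shows that, up to relabelling outcomes, either \emph{(i)} this eigenspace is a single product line $\ket{\alpha}\!\ket{\beta}$, so $\rho=\dyad{\alpha}{\alpha}\otimes\dyad{\beta}{\beta}$ is a pure product state with $\abs{\Braket{A_0}}=1\neq\tfrac13$, a contradiction; or \emph{(ii)} $A_0$ and $B_0$ are projective $\pm1$-valued observables with $+1$-eigenvectors $\ket{\alpha},\ket{\beta}$, and $\rho$ lives inside the two-dimensional subspace $W\equiv\operatorname{span}\braces{\ket{\alpha}\!\ket{\beta},\ket{\alpha^\perp}\!\ket{\beta^\perp}}$.

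Inside $W$ it is all $2\times2$ linear algebra. Writing $\rho|_W$ with diagonal weights $q,1-q$, both $A_0\otimes\id$ and $\id\otimes B_0$ act on $W$ as $\operatorname{diag}(1,-1)$, so $\Braket{A_0}=\Braket{B_0}=2q-1$ forces $q=\tfrac23$. Expanding $A_1$ in the $\braces{\ket{\alpha},\ket{\alpha^\perp}}$ basis and using that $B_0$ is block-diagonal there, the two requirements $\Braket{A_1}=\tfrac13$ and $\Braket{A_1B_0}=-\tfrac13$ become linear equations forcing $\Braket{\alpha|A_1|\alpha}=0$ and $\Braket{\alpha^\perp|A_1|\alpha^\perp}=1$; positivity of $\id-A_1$ then kills the off-diagonal entry, so $A_1=\dyad{\alpha^\perp}{\alpha^\perp}$, and symmetrically $B_1=\dyad{\beta^\perp}{\beta^\perp}$. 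But then $\Braket{A_1B_1}=\Tr{\rho\,\parens{\dyad{\alpha^\perp}{\alpha^\perp}\otimes\dyad{\beta^\perp}{\beta^\perp}}}=1-q=\tfrac13$, contradicting $\Braket{A_1B_1}=1$. Thus $\widetilde P\notin\mathcal{Q}_2$ and, by \cref{cor:SubLocalNonConvex}, $\mathcal{Q}_2$ is nonconvex for \sbell{2}{2}{2}. For general $m\ge2$ — and, together with \cref{lem:Q1notconvex}, for every nontrivial Bell scenario — I would embed this counterexample, padding surplus inputs with copies of an existing setting, surplus outputs with zero-probability outcomes, and surplus parties with unshared ancillas, so that any qubit model for the padded box restricts to one for $\widetilde P$. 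The most delicate step is the spectral case analysis on $A_0,B_0$: one must check that non-projective (``POVM'') $A_0,B_0$ are genuinely excluded outside case \emph{(ii)}, and dispatch the degenerate possibilities $A_0=\pm\id$ or $B_0=\pm\id$; everything past the reduction to $W$ is routine.
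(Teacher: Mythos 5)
Your argument is correct, but it reaches \cref{lem:Q2notconvex} by a genuinely different route than the paper, which proceeds by brute force: it parameterizes the most general two-qubit pure state and binary POVMs (\cref{sec:params}, Eqs.~(\ref{eq:POVMmarginal})--(\ref{eq:POVMparity})) and certifies infeasibility of mixtures from \cref{tab:local4} computationally, recovering some boundaries analytically in the supplementary notebook. You instead give a self-contained rigidity proof for the single point $\widetilde{P}=\tfrac13\parens{P_1+P_3+P_4}$ (the interior of the triangle in \cref{fig:Tri134}): the perfect correlator $\Braket{A_0B_0}=1$ forces $\rho$ onto the eigenvalue-$1$ eigenspace of $A_0\otimes B_0$; the marginal constraints $\abs{\Braket{A_0}}=\abs{\Braket{B_0}}=\tfrac13<1$ eliminate every spectral configuration except the one where $A_0,B_0$ are sharp $\pm1$ observables and $\rho$ is supported on $\operatorname{span}\braces{\ket{\alpha}\ket{\beta},\ket{\alpha^{\perp}}\ket{\beta^{\perp}}}$; and the remaining data then over-determine $A_1,B_1$. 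I verified the details: $q=\tfrac23$; the pair $\Braket{A_1}=\tfrac13$, $\Braket{A_1B_0}=-\tfrac13$ forces $\Braket{\alpha|A_1|\alpha}=0$ and $\Braket{\alpha^{\perp}|A_1|\alpha^{\perp}}=1$; positivity of $\id-A_1$ (a matrix with a vanishing diagonal entry) indeed kills the off-diagonal term; and $\Braket{A_1B_1}=1-q=\tfrac13\neq1$ closes the contradiction. The spectral case analysis is exhaustive in $d=2$, the degenerate cases $A_0=\pm\id$, $B_0=\pm\id$ being excluded directly by the marginals, and the padding argument for surplus inputs, outputs, and parties is sound because restricting or tracing a qubit model of the padded box yields a qubit model of $\widetilde{P}$. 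What your approach buys is a short, optimization-free, human-checkable certificate of $\mathcal{L}_{\infty}\nsubseteq\mathcal{Q}_2$, in the spirit of a dimension witness; what it gives up is the quantitative boundary information of \cref{fig:triangles} --- e.g.\ that \emph{only} the zero-area edges of the $\braces{P_1,P_3,P_4}$ triangle are qubit-achievable --- which the paper's general parameterization delivers.
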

\begin{proof}
To show this, we considered maximally general two-qubit preparations and measurement schemes, and found that convex combinations of various boxes in $\mathcal{Q}_{2}$ listed in \cref{tab:local4} were not themselves achievable with qubits\footnote{A brute-force proof seemed necessary to the authors. The method of \citet{ConditionalMinimumHSD}, however, can also be used to prove the essential nonconvexity, by certifying that \emph{some} of the boxes in the unshaded region of \cref{fig:Tri134} necessitate a quantum description in terms of a Hilbert space dimension greater than two, i.e. that qubits would be incompatible. Brute-force feasibility-checking is nevertheless required to generate the nonconvexity illustrations in \fig{fig:triangles}.};  see \fig{fig:triangles} for illustrative examples. We initially identified a few nonconvex boundaries of $\mathcal{Q}_{2}$ via computational maximization, most of which we subsequently recovered analytically. Details of our state and measurements parametrization may be found in \cref{sec:params}~\cite{acin2000generalized,characterizingentanglement,multireview,entang.review.toth,HeinosaariPOVM,ConnesEmbedding,PhysicalTsirelson,KirchbergConjecture,TsirelsonProblemArXiv,VBInequality}. The full derivation of the boundaries indicated in \fig{fig:triangles} may be found in a {\it Mathematica$^{_{\text{\it\tiny\texttrademark}}}$} notebook in the Supplementary Online materials \cite{localsom}.
\end{proof}

\begin{prop}\label{prop:somesuperlocality}
$\left(\mathcal{L}_{\infty}\setminus{\mathcal{L}_{\abslambda=d}}\right) \cap \mathcal{Q}_{d}\neq\mathlarger{\operatorname{{\emptyset}}},\;$ i.e. there exist local boxes which classically require shared-randomness of dimension at least $|\lambda^{\prime}|$, but which admit quantum ``shortcuts" through implementations using quantum systems of smaller dimension $d<|\lambda^{\prime}|$.
\end{prop}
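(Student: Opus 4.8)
The plan is to produce a single box witnessing membership in $\parens*{\mathcal{L}_\infty\setminus\mathcal{L}_{\abslambda=d}}\cap\mathcal{Q}_d$; the cleanest instance is $d=2$ in the \sbell{2}{2}{2} scenario, using the box $P_{1:4}$ of \cref{tab:local4} — the box with vanishing single-party marginals, $\Braket{A_0B_0}=\Braket{A_1B_1}=1$, and $\Braket{A_1B_0}=\Braket{A_0B_1}=0$ — which perfectly correlates the two outputs whenever the inputs agree and makes them statistically independent whenever the inputs disagree, all with uniform marginals.

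First I would verify $P_{1:4}\in\mathcal{Q}_2$ by explicit construction: let Alice and Bob share the maximally entangled state $\ket{\phi^+}=\parens*{\ket{00}+\ket{11}}/\sqrt2$, and let each party measure the projective qubit observable whose $(a\mathopen{=}0)$ Bloch vector is $\hat z$ on input $0$ and $\hat x$ on input $1$. From $\dyad{\phi^+}{\phi^+}=\tfrac14\parens*{\id\otimes\id+\sigma_x\otimes\sigma_x-\sigma_y\otimes\sigma_y+\sigma_z\otimes\sigma_z}$ one reads off $\Braket{A_x}=\Braket{B_y}=0$ and $\Braket{A_xB_y}=\hat a_x\cdot\hat b_y$, so the choices $\hat a_0=\hat b_0=\hat z$ and $\hat a_1=\hat b_1=\hat x$ reproduce $P_{1:4}$ exactly — indeed with projective measurements only, so $P_{1:4}\in\mathcal{Q}_2\text{-PVM}$.

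Next I would show that $P_{1:4}$ classically requires $\abslambda\geq4$, hence $P_{1:4}\notin\mathcal{L}_{\abslambda=3}\supseteq\mathcal{L}_{\abslambda=2}$. In any LHVM the identity $\Braket{A_0B_0}=\sum_\lambda p[\lambda]\Braket{A_0}_\lambda\Braket{B_0}_\lambda=1$ with each factor in $[-1,1]$ forces $\Braket{A_0}_\lambda=\Braket{B_0}_\lambda=\pm1$ for every $\lambda$ in the support, i.e.\ the responses to input $0$ are deterministic and identical for the two parties; similarly $\Braket{A_1B_1}=1$ pins down the responses to input $1$. Thus each $\lambda$ realizes one of the four deterministic boxes $P_1,\dots,P_4$, so $P_{1:4}=\sum_{i}q_i P_i$, and imposing the residual constraints $\Braket{A_0}=\Braket{A_1}=\Braket{A_0B_1}=0$ together with normalization yields a linear system whose unique solution is $q_i=\tfrac14$ for all $i$. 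All four deterministic boxes therefore appear with positive weight (they are moreover affinely independent vertices of the local polytope), so at least four distinct $\lambda$ values are needed.

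Combining the two parts, $P_{1:4}\in\parens*{\mathcal{L}_\infty\setminus\mathcal{L}_{\abslambda=d}}\cap\mathcal{Q}_d$ for $d=2$, establishing \cref{prop:somesuperlocality} with $\abs{\lambda^{\prime}}=4$; the same idea — pairing a maximally entangled state with mutually unbiased local measurements — lifts to larger $d$ and to multipartite scenarios. The $\mathcal{Q}_2$ side is a routine density-matrix calculation; the step demanding care is the classical rigidity argument, since it must rule out \emph{every} local model (including ones mixing non-deterministic response functions), and this is exactly what the observation ``a perfect correlator forces local determinism'' accomplishes.
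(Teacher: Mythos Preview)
Your proof is correct and follows essentially the same approach as the paper: both use $P_{1:4}$ as the witness, realize it in $\mathcal{Q}_2$ via the maximally entangled two-qubit state with $\hat\sigma_Z/\hat\sigma_X$ projective measurements, and argue that classically $\abslambda\geq4$ is required. Your rigidity argument---perfect correlators force local determinism, so any LHVM decomposes over $P_1,\dots,P_4$, and the remaining constraints pin the weights uniquely to $\tfrac14$ each---is in fact more complete than the paper's, which only exhibits the two-coin strategy and asserts its necessity without proving that no $\abslambda=3$ model exists.
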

\begin{proof}
Consider the local box $P_{1:4}$ from \cref{tab:local4}. If Alice and Bob choose the same input then their outputs are perfectly correlated, but if they choose different inputs then their outputs are unrelated. To achieve $P_{1:4}$ classically requires flipping two coins: When Alice or Bob input $0$ their output is given by the first coin flip. The input $1$, however, returns the second coin flip. Thus $P_{1:4}$ requires $\abslambda\geq4$, i.e. $P_{1:4} \in \mathcal{L}_{\infty}\setminus{\mathcal{L}_{3}}$. Alternatively, $P_{1:4}$ is quantumly achievable by sharing the entangled pure state $\frac{\ket{00}+\ket{11}}{\sqrt{2}}$ and letting ${\hat{A}_{1|0}-\hat{A}_{0|0}}={\hat{B}_{1|0}-\hat{B}_{0|0}}={\hat{\sigma}}_{Z}$ and ${\hat{A}_{1|1}-\hat{A}_{0|1}}={\hat{B}_{1|1}-\hat{B}_{0|1}}={\hat{\sigma}}_{X}$. So, $P_{1:4} \in \mathcal{Q}_{2}$. 
\end{proof}

Although a box which requires $\abslambda\geq4$ is inside $\mathcal{Q}_{2}$, nevertheless $\mathcal{L}_{4}\nsubseteq\mathcal{Q}_{2}$. Indeed, the empty interior of \fig{fig:Tri134} shows that $\mathcal{L}_{3}\nsubseteq\mathcal{Q}_{2}$, as is expected per \cref{conj:LnotsubsumedbysmallQ}. Thus entanglement enables not only non-locality, ie. boxes forbidden classically, but also \emph{super}-locality: quantum measurement schemes can occasionally reproduce boxes of higher corresponding shared randomness dimension, effectively simulating $\abslambda>d$. See \citet[Sec. 4.1]{ZhangQuantumGameTheory} for a thorough discussion of both the existence and extent of super-locality in the special case $\sbell{2}{1}{v}$. Interestingly, super-locality can occur even in the absence of entanglement \cite{ScaraniUnpublished}.

\section{Regaining convexity by adding shared randomness} 

The quantum elliptope, i.e. $\mathcal{Q}_{\infty}$, is convex. The convexity of the quantum elliptope, i.e. $\mathcal{Q}_{\infty}$, is well established; see for example Refs. \citep[Sec. 5C]{WernerWolfBellInequalities} and \citep[Sec. 5]{FritzCombinatorialLong}; proof may also be given in terms of properties of C*-algebras \cite{KirchbergConjecture,TsirelsonProblemInteraction}.  We have shown in \cref{lem:Q2notconvex}, however, that for finite local Hilbert space dimension $\mathcal{Q}_{d}$ is sometimes not convex. The nonconvexity of $\mathcal{Q}_{d}$ was also demonstrated by \citet{ConcavityPaperVertesi}.  There is no contradiction between the convexity of $\mathcal{Q}_{\infty}$ and the nonconvexity of $\mathcal{Q}_{d}$; rather, convexification of quantum boxes requires either classical shared randomness or, equivalently, comes at the expense of increasing the local Hilbert space dimension. 

The reason quantum boxes cannot be mixed without increasing Hilbert space dimension is because the measurements are local. The local nature of the measurement operators means that the composite POVM element associated with some global input and output $\hat{M}_{ab...|xy...}=\hat{A}_{a|x}\otimes\hat{B}_{b|y}\otimes\!...$ is necessarily a product operator. Mixtures of product operators, sometimes known as separable superoperators~\cite{SeparableMeasurementsNathanson,SeparableMeasurementsRusso}, are generally no longer product operators.

On the other hand, access to classical randomness allows for the quantum preparations and measurements to co-depend on a shared classical hidden variable. Indeed, any combination of $N$ qudit-based boxes can be implemented by sharing a single qudit, so long as the qudit is prepared according to a classical variable $\lambda$ of dimension $N$, and the variable $\lambda$ remains accessible to the measurements, i.e. per \fig{fig:lqDAG}.  Explicitly, we imagine the quantum preparations and measurements to co-depend deterministically on $\lambda$, i.e. $p[\rho^{(\gamma)}|\lambda]=p[\hat{A}^{(\gamma)}_{a|x}|\lambda]=...=\delta[\gamma-\lambda]$. By this construction, the resulting hybrid quantum-classical boxes yields
\begin{align}\label{eq:generichybridbox}
p[ab...|xy...]=\sum_{\lambda=0}^{\abslambda-1}p[\lambda]\Tr{\rho^{(\lambda)} \, \hat{A}^{(\lambda)}_{a|x} \otimes \hat{B}^{(\lambda)}_{b|y} \,...}.
\end{align}  

Let's denote the set of boxes achievable using quantum systems of dimension $d$ with the assistance of shared randomness of dimension $\abslambda$ as $\mathcal{Q}_d+\mathcal{L}_{\abslambda}$.  A hybrid box can be converted into a purely quantum description by embedding the shared randomness into a block-diagonal classical-quantum (cq) state of dimension $d\times\abslambda$~\cite{CQstate2008,DimensionWitness2008Wehner,BaezDirectSum,BookDirectSum}, and as such:  \begin{axiom}\label{theo:LQinQ}
${\mathcal{L}_{\abslambda}+\mathcal{Q}_{d}}\subseteq\mathcal{Q}_{d^{\prime}\geq d\times \abslambda},\;$ i.e. any hybrid quantum-classic box can be thought of as a purely-quantum box predicated on a sufficiently larger local Hilbert space dimension $d^{\prime}= d\times \abslambda$.
\end{axiom}
\begin{proof}
To quote \citet{BaezDirectSum}: ``Roughly speaking, if we have a physical system whose states are either states of ${\rho}^{(i)}$ OR states of ${\rho}^{(j)}$, its Hilbert space will be the direct sum ${\mathcal{H}}^{(i)}\oplus{\mathcal{H}}^{(j)}$." The direct sum of $N$ Hilbert spaces each of dimension $d$ can be embedded in a single Hilbert space of dimension $d\times N$ \cite{BookDirectSum}. See \cref{sec:convexification} for further details.
\end{proof}

Having shown that $\mathcal{Q}_d$ can be non-convex and having established how to include finite shared randomness into our framework, we next ask how much supplementary shared randomness is required so that $\mathcal{L}_{\abslambda}+\mathcal{Q}_{d}$ will be convex. Equivalently, what is the worst-case number of different $P\in\mathcal{Q}_{d}$ that must be combined in order to simulate \emph{any} box which is expressible as a mixture of boxes from  $\mathcal{Q}_{d}$?  We rephrase that question into a definition.

\begin{defin}[\QCN]
The Quantum Carathéodory Number $\QCN^{\sbell{n}{m}{v}}$ is the minimum amount of supplementary classical correlation that must be supplied for $\mathcal{Q}^{\sbell{n}{m}{v}}_{d}$ to become completely convexified. Formally,
\begin{align}\begin{split}\label{eq:QCNnew}
\QCN &\equiv \min\; \abslambda\; \text{ such that }\;
\\&
\mathcal{L}_{\abslambda}+\mathcal{Q}^{\sbell{n}{m}{v}}_{d}=\CH{\mathcal{Q}_d}.
\end{split}\end{align}
We call $\QCN$ the quantum Carathéodory number because \cref{eq:QCNnew} dictates that every $P\in\CH{\mathcal{Q}_d}$ can be decomposed as a convex mixture of at-most $\QCN$ boxes from $\mathcal{Q}_d$, whereas ${\QCN-1}$ would not be adequate. 
\end{defin}

Although $\QCN$ may depend on $d$, there is a way to upper-bound the quantum Carathéodory number independently of $d$. 
\begin{theo}\label{theo:FenchelCountnew}
$\QCN \leq {\left(m(v-1)+1\right)}^n-1\,$.
\end{theo}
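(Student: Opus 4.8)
The plan is to recognize that the quantity $\mathcal{F}={\left(m(v-1)+1\right)}^n-1$ counts the number of free parameters needed to specify any no-signalling box, per \cref{eq:dimcount}, and then to apply the classical Carath\'eodory theorem in that ambient parameter space. First I would observe that every box in $\mathcal{Q}_d$ --- and hence every box in $\CH{\mathcal{Q}_d}$ --- lives inside the no-signalling polytope, which is an affine subset of $\mathbb{R}^{\mathcal{F}}$ (the affine span being cut out by the normalization-and-no-signalling relations that were used to reduce to $\mathcal{F}$ explicit parameters). Thus $\CH{\mathcal{Q}_d}$ is a convex subset of an affine space of dimension $\mathcal{F}$. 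Carath\'eodory's theorem then guarantees that any point of $\CH{\mathcal{Q}_d}$ can be written as a convex combination of at most $\mathcal{F}+1$ extreme points of $\mathcal{Q}_d$.

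Second, I would sharpen the count from $\mathcal{F}+1$ down to $\mathcal{F}$ using the standard refinement of Carath\'eodory that applies when the set being hulled is connected (equivalently, when the convex body is the hull of a path-connected set): in that case $\mathcal{F}$ points suffice. The key sub-step is therefore to argue that $\mathcal{Q}_d$ is path-connected. This follows because the set of density matrices of a fixed dimension $d^n$ is path-connected (indeed convex), the set of local POVM assignments is path-connected, and the Born-rule map of \cref{eq:genericquantumbox} is continuous; composing, the image $\mathcal{Q}_d$ is the continuous image of a path-connected set, hence path-connected. Once path-connectedness is in hand, the connected-set version of Carath\'eodory (sometimes attributed to Fenchel, matching the label \cref{theo:FenchelCountnew}) yields the bound $\QCN\leq\mathcal{F}$. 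Finally I would note that a convex combination of $k$ boxes from $\mathcal{Q}_d$ is exactly a box in $\mathcal{L}_k+\mathcal{Q}_d$, so ``$\mathcal{F}$ boxes suffice'' is precisely the statement $\mathcal{L}_{\mathcal{F}}+\mathcal{Q}_d=\CH{\mathcal{Q}_d}$, giving $\QCN\leq\mathcal{F}$ and matching \cref{eq:QCNnew}.

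I expect the main obstacle to be the careful bookkeeping of which ambient space to work in: one must be sure that the relevant dimension is the affine dimension $\mathcal{F}$ of the no-signalling set rather than the naive number $v^{nm}$ of all conditional probabilities, and one must confirm that $\mathcal{Q}_d$ genuinely lies in that affine subspace (it does, since quantum boxes are no-signalling by the tensor-product structure). A secondary subtlety is invoking the connected-set improvement of Carath\'eodory correctly --- it requires that $\mathcal{Q}_d$ be connected, not merely that its convex hull be, so the continuity-of-the-Born-map argument above must be stated cleanly. Neither obstacle is deep, but both are the places where a hasty argument would lose a $+1$ or apply the wrong dimension count.
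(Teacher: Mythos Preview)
Your proposal is correct and follows essentially the same route as the paper: identify the ambient affine dimension $\mathcal{F}$ from \cref{eq:dimcount}, invoke Fenchel's connected-set strengthening of Carath\'eodory, and justify path-connectedness of $\mathcal{Q}_d$ via continuity of the Born map on a connected parameter space. The only detail the paper adds that you omit is an explicit mention that $\mathcal{Q}_d$ is closed (indeed compact), which is a hypothesis in the usual statement of Fenchel's theorem; you may want to note this, though it follows immediately since $\mathcal{Q}_d$ is the continuous image of a compact set of states and measurements.
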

\begin{proof}
A 1929 theorem of Werner Fenchel \cite{Hanner1951,Brny2012} states that any point within the convex hull of a not-necessarily-convex $\mathcal{F}$-dimensional closed and pathwise-connected set can be decomposed as a convex mixture of at-most $\mathcal{F}$ points in the set; Fenchel's theorem is a strengthening of Carathéodory's theorem \cite{CaratheodoryMathworld} for the special case of indivisible sets such as pathwise-connected ones. $\mathcal{Q}_d$ is a pathwise-connected closed compactum of dimension $\mathcal{F}$; pathwise-connectedness follows from the continuous parameterization of both states and measurements and the Intermediate Value Theorem. 
The statistical space dimension of no-signalling boxes in a symmetric Bell scenarios is given by \cref{eq:dimcount}~\cite{PironioDimension,Brunner2013Bell}.
\end{proof}
For example, any box in $\CH{\mathcal{Q}^{\sbell{2}{2}{2}}_{d}}$ can be expressed as the convex combination of at-most eight qudit-based boxes. The actual quantum Carathéodory number may potentially be much lower than this upper bound, and is likely to have an explicit dependence on $d$. All we've established from \cref{lem:Q2notconvex} is that $\abs{\lambda^{\star}_{\mathcal{Q}}}_2\neq 1$. A tighter upper bound for $\QCN$ than is given in \cref{theo:FenchelCountnew} is desideratum for future research.

Importantly, the finiteness of ${|\lambda^{\star}_{\mathcal{Q}}|}{}_d$ can be used to guarantee the convexity of certain quantum sets ${\mathcal{Q}^{\sbell{n}{m}{v}}_{d}}$ for finite dimension $d$. 
\begin{theo}\label{theo:NoNewExtremePointsThenConvex}
If ${\mathcal{Q}_{\infty}=\CH{\mathcal{Q}_{d}}}$ for a given Bell scenario , i.e. all extremal quantum distributions are achievable with qudits of dimension $d$, then $\mathcal{Q}_{d'}$ is convex if $d'\geq{ d\times \QCN}$.
\end{theo}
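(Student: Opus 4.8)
The plan is to show that, under the stated hypothesis, the set $\mathcal{Q}_{d'}$ actually \emph{coincides} with the full quantum elliptope $\mathcal{Q}_{\infty}$ for every $d'\geq d\times\QCN$, so that its convexity is simply inherited from the already-established convexity of $\mathcal{Q}_{\infty}$. The whole argument is a short chain of set inclusions assembled from three ingredients already in hand: monotonicity of $\mathcal{Q}_{d}$ in $d$, the definition of $\QCN$, and \cref{theo:LQinQ}.

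First I would record the monotonicity $\mathcal{Q}_{d_1}\subseteq\mathcal{Q}_{d_2}$ whenever $d_1\leq d_2$: any box realized on local Hilbert spaces of dimension $d_1$ is also realized on dimension $d_2$ by composing the state and the measurement operators with an isometric embedding $\mathbb{C}^{d_1}\hookrightarrow\mathbb{C}^{d_2}$ (equivalently, padding $\rho$ and the POVM elements with zero blocks). In particular every $\mathcal{Q}_{d'}$ with $d'$ finite sits inside $\mathcal{Q}_{\infty}$, a fact already used implicitly throughout the paper.

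Next I would invoke the definition of the quantum Carathéodory number at the dimension $d$ fixed by the hypothesis: by \cref{eq:QCNnew} we have $\mathcal{L}_{\QCN}+\mathcal{Q}_{d}=\CH{\mathcal{Q}_{d}}$, and $\QCN$ is finite by \cref{theo:FenchelCountnew}. Combining this equality with the hypothesis $\mathcal{Q}_{\infty}=\CH{\mathcal{Q}_{d}}$ and with \cref{theo:LQinQ} (which gives $\mathcal{L}_{\QCN}+\mathcal{Q}_{d}\subseteq\mathcal{Q}_{d\times\QCN}$) yields the chain $\mathcal{Q}_{\infty}=\CH{\mathcal{Q}_{d}}=\mathcal{L}_{\QCN}+\mathcal{Q}_{d}\subseteq\mathcal{Q}_{d\times\QCN}$. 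Since the reverse inclusion $\mathcal{Q}_{d\times\QCN}\subseteq\mathcal{Q}_{\infty}$ is automatic from the previous paragraph, we obtain $\mathcal{Q}_{d\times\QCN}=\mathcal{Q}_{\infty}$. Finally, for any $d'\geq d\times\QCN$ the monotonicity sandwich $\mathcal{Q}_{\infty}=\mathcal{Q}_{d\times\QCN}\subseteq\mathcal{Q}_{d'}\subseteq\mathcal{Q}_{\infty}$ forces $\mathcal{Q}_{d'}=\mathcal{Q}_{\infty}$, which is convex, completing the proof.

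I do not expect a genuine obstacle here; the result is essentially a corollary of the machinery already developed. The only thing demanding care is bookkeeping: the hypothesis is phrased at dimension $d$, so it must be $\QCN$ evaluated at that \emph{same} $d$ that enters the product $d\times\QCN$, and one must be sure that supplementing $\mathcal{Q}_{d}$ with shared randomness never produces anything outside $\mathcal{Q}_{\infty}$ — which is exactly the content of \cref{theo:LQinQ} together with the trivial embedding of every finite-dimensional quantum set into $\mathcal{Q}_{\infty}$. If one wished to avoid even invoking convexity of $\mathcal{Q}_{\infty}$ as a black box, one could instead argue directly that $\CH{\mathcal{Q}_{d\times\QCN}}\subseteq\CH{\mathcal{Q}_{\infty}}=\mathcal{Q}_{\infty}=\mathcal{Q}_{d\times\QCN}$, but using the known convexity of the elliptope is the cleanest route.
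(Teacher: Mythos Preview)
Your proposal is correct and matches the paper's own proof essentially step for step: use the definition of $\QCN$ to write $\CH{\mathcal{Q}_d}=\mathcal{L}_{\QCN}+\mathcal{Q}_d$, apply \cref{theo:LQinQ} to embed this in $\mathcal{Q}_{d\times\QCN}$, invoke the hypothesis to get $\mathcal{Q}_{\infty}\subseteq\mathcal{Q}_{d'}$ for $d'\geq d\times\QCN$, and conclude $\mathcal{Q}_{d'}=\mathcal{Q}_{\infty}$ is convex. The only difference is that you spell out the monotonicity and the sandwich argument for general $d'$ more explicitly than the paper does.
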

\begin{cor}\label{cor:MasanesConvexity}
For \sbell{n}{2}{2} scenarios, $\mathcal{Q}_{d'}$ is convex if $d' \geq 2\left(3^n-1\right)$, or assuming the upper bound of \cref{theo:FenchelCountnew} is not tight, convexity is certain whenever $d' \geq 2 \abs{\lambda^{\star}_{\mathcal{Q}}}_2$.
\end{cor}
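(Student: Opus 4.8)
The plan is to derive \cref{cor:MasanesConvexity} from \cref{theo:NoNewExtremePointsThenConvex}; the only substantive ingredient is verifying that theorem's hypothesis for the \sbell{n}{2}{2} family, namely that every extreme point of $\mathcal{Q}_{\infty}$ is already achievable with qubits, so that $\mathcal{Q}_{\infty}=\CH{\mathcal{Q}_{2}}$. This is Masanes' characterization of extremal quantum correlations with two dichotomic observables per site, and I would reconstruct its essential steps. Let $P$ be an extreme point of the compact convex set $\mathcal{Q}_{\infty}$, realized as in \cref{eq:genericquantumbox} by some $\rho$ and local measurements. Writing $\rho$ as a mixture of global pure states and using extremality of $P$, one may take $\rho=\ket{\psi}\bra{\psi}$ pure; adjoining a \emph{local} ancilla to each party (one per input if necessary) then replaces every two-outcome POVM by a two-outcome projective measurement while keeping the global state pure and the tensor-product structure intact. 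We are reduced to a pure state measured by, for each party $j$, two orthogonal-projector measurements, equivalently two unitary involutions $R_{0}^{(j)},R_{1}^{(j)}$ on $\mathcal{H}_{j}$.

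The second step is Jordan's lemma applied party by party: there is an orthonormal basis of $\mathcal{H}_{j}$ in which $R_{0}^{(j)}$ and $R_{1}^{(j)}$ are simultaneously block-diagonal with blocks of dimension at most two, so $\mathcal{H}_{j}=\bigoplus_{\alpha}V_{\alpha}^{(j)}$ with $\dim V_{\alpha}^{(j)}\le 2$. The global Hilbert space then decomposes as $\bigoplus_{\vec{\alpha}}\bigotimes_{j}V_{\alpha_{j}}^{(j)}$, and each product operator $\hat{A}_{a|x}\otimes\hat{B}_{b|y}\otimes\cdots$ is block-diagonal with respect to it, so amplitudes between distinct sectors do not contribute to $p[ab\ldots|xy\ldots]$. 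Writing $\ket{\psi}=\sum_{\vec{\alpha}}\sqrt{q_{\vec{\alpha}}}\,\ket{\psi_{\vec{\alpha}}}$ with $\ket{\psi_{\vec{\alpha}}}$ normalized inside the $\vec{\alpha}$-sector gives $P=\sum_{\vec{\alpha}}q_{\vec{\alpha}}P_{\vec{\alpha}}$ with every $P_{\vec{\alpha}}\in\mathcal{Q}_{2}$, since each sector has all local dimensions at most two and the restricted measurements remain projective. Extremality of $P$ in $\mathcal{Q}_{\infty}\supseteq\mathcal{Q}_{2}$ forces $P=P_{\vec{\alpha}}$, hence $P\in\mathcal{Q}_{2}$. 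As $\mathcal{Q}_{2}$ is the continuous image of a compact parameter space, $\CH{\mathcal{Q}_{2}}$ is compact and convex; since it contains every extreme point of $\mathcal{Q}_{\infty}$ it contains $\mathcal{Q}_{\infty}$, and the opposite inclusion is immediate, so $\mathcal{Q}_{\infty}=\CH{\mathcal{Q}_{2}}$ (in particular no closure subtlety arises).

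With the hypothesis of \cref{theo:NoNewExtremePointsThenConvex} met at $d=2$, that theorem yields convexity of $\mathcal{Q}_{d'}$ whenever $d'\ge 2\,\abs{\lambda^{\star}_{\mathcal{Q}}}_{2}$, i.e.\ twice the quantum Carathéodory number at $d=2$; this is the conditional claim of \cref{cor:MasanesConvexity}. Substituting the scenario-independent estimate of \cref{theo:FenchelCountnew}, which for $m=v=2$ reads $\abs{\lambda^{\star}_{\mathcal{Q}}}_{2}\le(2(2-1)+1)^{n}-1=3^{n}-1$ via \cref{eq:dimcount}, upgrades this to the unconditional sufficient condition $d'\ge 2(3^{n}-1)$.

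I expect the main obstacle to be the multipartite bookkeeping in the Jordan step: for $n\ge3$ one must check carefully that the global invariant subspaces are exactly the tensor products $\bigotimes_{j}V_{\alpha_{j}}^{(j)}$ of the parties' individual Jordan blocks, and that each local projective observable — hence each product POVM element $\hat{M}_{ab\ldots|xy\ldots}$ — respects this decomposition, which is what makes the sector cross-terms drop out and validates $P=\sum_{\vec{\alpha}}q_{\vec{\alpha}}P_{\vec{\alpha}}$. The reductions to a pure global state and to projective measurements are routine but warrant an explicit line, since the local-ancilla dilation temporarily enlarges the local dimensions before Jordan's lemma brings them back down to two, and one should note that this never disturbs the product structure because the ancillas are local and are acted upon only by the dilated measurements. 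Everything past the identity $\mathcal{Q}_{\infty}=\CH{\mathcal{Q}_{2}}$ is a direct substitution into \cref{theo:NoNewExtremePointsThenConvex,theo:FenchelCountnew}.
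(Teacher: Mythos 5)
Your proposal is correct and follows essentially the same route as the paper: the paper's proof simply invokes Masanes' theorem to obtain $\mathcal{Q}_{\infty}=\CH{\mathcal{Q}_{2}}$ for \sbell{n}{2}{2} scenarios and then applies \cref{theo:NoNewExtremePointsThenConvex} with $d=2$ together with the bound $\QCN\leq 3^{n}-1$ from \cref{theo:FenchelCountnew}. The only difference is that you reconstruct Masanes' purification/Naimark/Jordan-lemma argument in full rather than citing it, which is a correct but optional elaboration.
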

\begin{proof}
By the definition of the quantum Carathéodory number, $\CH{\mathcal{Q}_{d}}={\mathcal{L}_{\QCN}+\mathcal{Q}_d}$. By \cref{theo:LQinQ} we can embed the supplementary classical correlations into the shared quantum state such that $\CH{\mathcal{Q}_{d}}\subseteq\mathcal{Q}_{d'}$ whenever $d'\geq d\times \QCN$. However, the promise of ${\mathcal{Q}_{\infty}=\CH{\mathcal{Q}_{d}}}$ means that $\mathcal{Q}_{\infty}\subseteq\mathcal{Q}_{d'}$, from which it follows that $\mathcal{Q}_{\infty}=\mathcal{Q}_{d'}$. Thus, $\mathcal{Q}_{d'}$ is guaranteed to be convex. 

The corollary is a consequence of Masanes' theorem \cite{MasanesQubitsEarly,MasanesQubits}, which states that (projective) measurements on merely shared qubits are capable of achieving all extremal quantum distributions for scenarios involving two binary measurements per party \cite{FritzDuality}. Masanes' theorem is often cited when noting that the maximum violation of Bell inequalities for such scenarios can be computed by maximizing over qubit-based boxes.
\end{proof}

It is not clear if \cref{cor:MasanesConvexity} can be extended to more general scenarios.  For example, the $I_{3322}$ Bell inequality in the \sbell{2}{m}{2} many-measurement-choices scenario is apparently ever-more-violated as $d$ is increased \cite{I3322Original,I3322NPA2,InfiniteDimensionViolation,DimensionDependantBellViolation3}.  The boundaries of the quantum set increasing with dimension prevent using the above arguments, and indeed it remains an open question whether $\mathcal{Q}^{\sbell{2}{3}{2}}_d$ is convex for \emph{any} finite $d$.  On the other hand, there is numerical evidence that \sbell{2}{2}{v} many-measurement-outcomes scenarios, such as the CGLMP scenario~\cite{CGLMP02}, achieve maximum nonlocality at $d=v$~\cite{CGLMP08,NPA2008Long}. The maximum violation of every Bell inequality, however, does not necessarily imply that all quantum extremal distributions have been achieved, so it is not certain that convexity is achieved. See Ref. \citep[Sec. III.B]{Brunner2013Bell} for further details.

\begin{conj}\label{conj:artifact}
We conjecture that nonconvexity is nothing more than an artifact of not spanning the local polytope, i.e. that if $\mathcal{L}_{\abslambdastar}\subseteq{\mathcal{L}_{\abslambda}+\mathcal{Q}_d}$ then ${\mathcal{L}_{\abslambda}+\mathcal{Q}_d}$ is convex. %
\end{conj}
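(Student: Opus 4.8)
We do not have a proof of \cref{conj:artifact}; the following sketches what seems to be the most promising line of attack. First reduce the statement to a set equality: adding bounded shared randomness never changes the convex hull, $\CH{\mathcal{L}_{\abslambda}+\mathcal{Q}_d}=\CH{\mathcal{Q}_d}$, since $\mathcal{Q}_d\subseteq\mathcal{L}_{\abslambda}+\mathcal{Q}_d\subseteq\CH{\mathcal{Q}_d}$, and a set is convex exactly when it equals its own convex hull. Because $\mathcal{L}_{\abslambdastar}=\mathcal{L}_{\infty}$ by \cref{eq:minsharedrandom}, \cref{conj:artifact} is equivalent to the implication that $\mathcal{L}_{\infty}\subseteq\mathcal{L}_{\abslambda}+\mathcal{Q}_d$ forces $\CH{\mathcal{Q}_d}\subseteq\mathcal{L}_{\abslambda}+\mathcal{Q}_d$, i.e.\ that every $P\in\CH{\mathcal{Q}_d}$ decomposes into at most $\abslambda$ boxes of $\mathcal{Q}_d$. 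Equivalently, in the language of \cref{eq:QCNnew}, the conjecture asserts that $\QCN$ coincides with the least $\abslambda$ for which $\mathcal{L}_{\infty}\subseteq\mathcal{L}_{\abslambda}+\mathcal{Q}_d$: one direction of this equality is trivial, so only the claim that spanning the local polytope already buys full convexification requires proof.

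Given $P\in\CH{\mathcal{Q}_d}$, \cref{theo:FenchelCountnew} (Fenchel) supplies a decomposition $P=\sum_{i=1}^{\mathcal{F}}c_iQ_i$ with $Q_i\in\mathcal{Q}_d$, and one must merge these down to at most $\abslambda$ quantum components. The ingredients I would try to combine are: (i) \emph{partial star-shapedness of $\mathcal{Q}_d$} --- mixing the shared state $\rho$ of any $Q_i$ with $\id/d^{n}$ keeps the local dimension at $d$ while sliding $Q_i$ along a segment toward a product box, so components may be ``diluted'' inside $\mathcal{Q}_d$ for free; (ii) the hypothesis itself, which compresses any $\abslambdastar$-term mixture of deterministic local boxes into an $\abslambda$-term mixture of $\mathcal{Q}_d$ boxes --- morally, enough entanglement-assisted superlocality (cf.\ \cref{prop:somesuperlocality}) to absorb classical overhead; and (iii) \cref{theo:LQinQ}, to move freely between the hybrid and the purely-quantum pictures. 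The target argument would split $P=\alpha L+(1-\alpha)\widetilde{Q}$ into a ``local content'' $L\in\mathcal{L}_{\infty}$ and a remainder $\widetilde{Q}$ assembled from genuinely quantum extreme points, absorb $L$ via the hypothesis, and argue that $\widetilde{Q}$ never requires more than $\abslambda$ pieces.

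The main obstacle is that this merging is exactly where the nonconvexity of $\mathcal{Q}_d$ bites: there is no clean criterion for when a convex combination $\alpha Q_1+(1-\alpha)Q_2$ of two boxes in $\mathcal{Q}_d$ lands back in $\mathcal{Q}_d$ --- the boundaries in \cref{fig:triangles} were obtained only computationally --- so bounding the number of merges, and performing the ``local content'' split without circularly invoking the conclusion, is the crux. A reasonable first target is the \sbell{2}{2}{2} scenario at $d=2$, where Masanes' theorem gives $\CH{\mathcal{Q}_2}=\mathcal{Q}_{\infty}$ (cf.\ \cref{cor:MasanesConvexity}) and the explicit parametrization behind \cref{lem:Q2notconvex} is available: checking whether $\mathcal{L}_{\abslambda}+\mathcal{Q}_2$ turns convex precisely when it first engulfs $\mathcal{L}_{\infty}$ would either corroborate the ``local content'' mechanism or show that a new idea --- perhaps a dimension count of the extreme points of $\CH{\mathcal{Q}_d}$ contributed by $\mathcal{L}_{\infty}$ --- is needed in general.
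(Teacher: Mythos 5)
This statement is a \emph{conjecture}; the paper offers no proof of it, only the surrounding observation that it is equivalent to asserting $\QCN=\min\abslambda$ such that $\mathcal{L}_{\abslambdastar}\subseteq{\mathcal{L}_{\abslambda}+\mathcal{Q}_d}$ (together with the trivial converse via \cref{cor:SubLocalNonConvex}). You correctly decline to claim a proof, and your reduction --- that $\CH{\mathcal{L}_{\abslambda}+\mathcal{Q}_d}=\CH{\mathcal{Q}_d}$, so convexity of the hybrid set is equivalent to it exhausting $\CH{\mathcal{Q}_d}$, so the conjecture says the quantum Carath\'eodory number is attained as soon as the local polytope is engulfed --- is precisely the reformulation the paper itself records. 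Your additional ingredients (star-shapedness of $\mathcal{Q}_d$ toward product boxes, merging Fenchel components) go beyond anything in the paper but, as you honestly note, do not close the gap; the question remains open there as well.
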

In stating \cref{conj:artifact} we used the equivalence ${\mathcal{L}_{\infty}=\mathcal{L}_{\abslambdastar}}$ per \cref{eq:minsharedrandom}. Recall that the converse of \cref{conj:artifact} is obviously true: if a set of boxes -- quantum, classical, or hybrid -- does not contain the local polytope, then the set is not convex per \cref{cor:SubLocalNonConvex}.

\cref{conj:artifact} amounts to speculating that $\QCN=\min\; \abslambda$ such that $\mathcal{L}_{\abslambdastar}\subseteq{\mathcal{L}_{\abslambda}+\mathcal{Q}_d}$. If true, this would replace \cref{theo:FenchelCountnew} with the claim  $\QCN \leq \ceil{\abslambdastar/d}.$ This follows from $\mathcal{L}_{d}\subseteq\mathcal{Q}_{d}$ per \cref{theo:LinsideQ} and then by  $\mathcal{L}_{\abslambdastar}\subseteq{\mathcal{L}_{\ceil{{\abslambdastar}/{d}}}+\mathcal{L}_{d}}$, which merely notes that a random variable can be always be decomposed into multiple constituent parts. Equivalently, any integer $[1,N]$ can be mapped injectively to an ordered tuple $\brackets{[1,M],[1,\ceil{{N}/{M}}]}$.

To be clear, it is an open question if $\mathcal{Q}_d$ is \emph{ever} convex for finite $d$ in scenarios where \cref{theo:NoNewExtremePointsThenConvex} does not apply. \cref{conj:artifact} not only speculates the affirmative but effectively proposes a minimal value for what that finite $d$ might be.  Finally, note that an implication of \cref{conj:LnotsubsumedbysmallQ} and \cref{conj:artifact} combined is that $\mathcal{Q}_d$ should be convex if and only if $d\geq\abslambdastar$. 

\section{Discussion}

Questions remain even when still considering the \sbell{2}{2}{2} scenario.  We've established that $\mathcal{L}^{\sbell{2}{2}{2}}_{\infty}\nsubseteq\mathcal{Q}_{2}$, but would qutrits be able to span the local polytope, or would $d=4$ be required?\footnote{
\(\abslambdastar^{\sbell{2}{2}{2}}=4\) is proven in the \cref{sec:lambdastar}. \label{foot:L4sufficient}} The insufficiency of qutrits is speculated in \cref{conj:LnotsubsumedbysmallQ}, but this should certainly be investigated further.  Furthermore, although from \cref{cor:MasanesConvexity} is is clear that $\mathcal{Q}^{\sbell{2}{2}{2}}_{16}$ is convex, there's still a large gap between the non-convex result of $\mathcal{Q}_{2}$ and the yes-convex result of $\mathcal{Q}_{16}$. The convexity of $\mathcal{Q}^{\sbell{2}{2}{2}}_{4}$ is speculated in \cref{conj:artifact}, but this too should certainly be investigated further. For general scenarios where one presumes that \cref{theo:NoNewExtremePointsThenConvex} does not apply, such as \sbell{2}{3}{2} \cite{I3322NPA2}, we have noted it is a completely open question if $\mathcal{Q}_{d}$ is \emph{ever} convex for finite $d$.

Most generally, given three descriptive elements: 1) an operational description of some Bell scenario such as \sbell{n}{m}{v}, 2) the local Hilbert space dimension limit of $d$, and 3) the dimension limit of any shared classical randomness $\abslambda$, are the resulting correlations ${\mathcal{L}_{\abslambda}+\mathcal{Q}_d}$ convex?  Such fundamental questions remain generally unanswered despite the broad results of \cref{theo:FenchelCountnew,theo:NoNewExtremePointsThenConvex}. The purely classical regime is considered further, however, in the \cref{sec:lambdastar}. 

Quantifying the genuine boundaries of finite-dimensionally-generated quantum correlations, as opposed to the convex hull of such correlations, is an important question for future research.  We have evidenced that nonconvexity should be expected in all nonlocality scenarios with sufficiently restrictive constraints on the local Hilbert space dimensions.  As many physical systems have implicit bounds on their local Hilbert space dimensions it is all the more important to anticipate nonconvexity of quantum correlations in practical quantum information-theoretic protocols.  Purely quantum systems with sufficiently low dimensionality are forbidden from displaying certain classical correlations, a property which may be exploited as a device-authenticating security check in quantum cryptographic implementations.  

The nonconvexity of quantum correlations is no less relevant, practically, than other no-go results pertaining to finite Hilbert space dimensions. Quantifying the \emph{genuine} boundaries of finite-dimensionally-generated quantum correlations, as opposed to the convex hull of such correlations \cite{Navascues2013FiniteDimensions,Navascues2014FiniteDimensions}, is therefore an important question for future research.  How the nonconvexity of dimensionally limited correlations may clarify the relationship between Hilbert space dimension and degree of classical randomness is an intriguing area for future work~\cite{RandomnessVNonlocality,NonlocalityAnomaly,DimensionAsResource}.  Perhaps classical information theory tools for considering limited shared randomness \cite{GeometryAndNovel} can be adapted and applied to finite dimensional quantum systems.  Fuller quantitative operational characterizations of finite dimension quantum systems are thus valuable desiderata for both foundational and practical research.

\begin{acknowledgments}
We are grateful to Joshua Combes, Tobias Fritz, Ravi Kunjwal, Corsin Pfister, Marco Quintino, Sandu Popescu, Matt Pusey, Kevin Resch, Vincent Russo, and Rob Spekkens for valuable discussions, and to Matty Hoban for alerting us to a serious flaw in an early version of this manuscript. J.~M.~D. is grateful for support from the Natural Sciences and Engineering Research Council of Canada. Research at Perimeter Institute is supported by the Government of Canada through Industry Canada and by the Province of Ontario through the Ministry of Economic Development and Innovation.
\end{acknowledgments}

\appendix

\section{Correlations from asymmetric-dimensional quantum states}\label{sec:asymmdim}

In \cref{sec:notation}, we defined $\mathcal{Q}_d^{\sbell{n}{m}{v}}$ as the set of correlations achievable when all parties share a quantum state with \emph{equal} local dimensions $d$.  Here we note that any result concerning the uniform-dimension case also applies to the asymmetric-dimension case if the scenario in question is bipartite.  We are grateful to referee comments for suggesting this extension.

\begin{prop}\label{theo:asymmbipartite}
If, in some bipartite Bell scenario, Alice and Bob share a quantum system with local dimensions $d_A$ and $d_B$, respectively, the set of distributions they can achieve is identical to the set of distributions achievable if the minimum dimension were common across both parties:  $\mathcal{Q}_{\{d_A,d_B\}}^{\sbell{2}{m}{v}} = \mathcal{Q}_{d_{min}}^{\sbell{2}{m}{v}}$, where $d_{min}=\min \{d_A,d_B\}$.
\end{prop}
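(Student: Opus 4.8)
The plan is to prove the two inclusions $\mathcal{Q}_{d_{min}}^{\sbell{2}{m}{v}}\subseteq\mathcal{Q}_{\{d_A,d_B\}}^{\sbell{2}{m}{v}}$ and $\mathcal{Q}_{\{d_A,d_B\}}^{\sbell{2}{m}{v}}\subseteq\mathcal{Q}_{d_{min}}^{\sbell{2}{m}{v}}$ separately. The first is nearly immediate: since $d_{min}\le d_A$ and $d_{min}\le d_B$, a state $\sigma$ on $\mathbb{C}^{d_{min}}\otimes\mathbb{C}^{d_{min}}$ together with its local POVMs embeds isometrically into $\mathbb{C}^{d_A}\otimes\mathbb{C}^{d_B}$; to restore POVM normalization one folds the leftover weight $\id-P$, where $P$ projects onto the embedded subspace, into a single outcome per input, which leaves every probability $\Tr{\rho\,(\hat A_{a|x}\otimes\hat B_{b|y})}$ unchanged because the state has no support outside the embedded subspace. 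So the quantum sets are monotone under enlargement of either local dimension, which gives the first inclusion.

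The substantive direction is the reverse one. Without loss of generality assume $d_A\ge d_B=d_{min}$, and consider a box generated per \ceq{eq:genericquantumbox} by some (possibly mixed) $\rho$ on $\mathbb{C}^{d_A}\otimes\mathbb{C}^{d_B}$ with local POVMs $\{\hat A_{a|x}\}$ and $\{\hat B_{b|y}\}$. First I would purify $\rho$ by adjoining an environment $E$ with $\dim E=\operatorname{rank}(\rho)$, producing a pure state $\ket{\Psi}$ on $\mathbb{C}^{d_A}\otimes\mathbb{C}^{d_B}\otimes\mathbb{C}^{d_E}$, and crucially assign $E$ to \emph{Alice}, the larger-dimensional party. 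Alice's measurement operators become $\hat A_{a|x}\otimes\id_E$, which are still valid POVM elements, and the statistics are unchanged because tracing out $E$ returns $\rho$. Viewed across the bipartition with Alice holding $\mathbb{C}^{d_A}\otimes\mathbb{C}^{d_E}$ and Bob holding $\mathbb{C}^{d_B}$, the pure state $\ket{\Psi}$ has Schmidt rank at most $d_B=d_{min}$.

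Next I would compress to the Schmidt subspaces: let $\Pi_A$ and $\Pi_B$ project onto the supports (each of dimension at most $d_{min}$) of the reduced states of $\ket{\Psi}$ on Alice's and Bob's sides. Since $\ket{\Psi}=(\Pi_A\otimes\Pi_B)\ket{\Psi}$, the box is reproduced exactly by $\ket{\Psi}$ regarded on the ranges of $\Pi_A$ and $\Pi_B$ together with the compressed operators $\Pi_A(\hat A_{a|x}\otimes\id_E)\Pi_A$ and $\Pi_B\hat B_{b|y}\Pi_B$; these are positive and sum over $a$ (respectively $b$) to $\Pi_A$ (respectively $\Pi_B$), hence are genuine POVMs on the respective at-most-$d_{min}$-dimensional subspaces. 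Padding each subspace out to dimension exactly $d_{min}$ --- again absorbing any identity defect into one POVM element --- exhibits the box as an element of $\mathcal{Q}_{d_{min}}^{\sbell{2}{m}{v}}$, completing the second inclusion and hence the equality.

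The steps requiring care, and where the argument is least automatic, are (i) routing the purifying environment to the higher-dimensional party so that the Schmidt rank is bounded by $d_B$ rather than by $d_A$, and (ii) verifying that projecting and then re-padding the measurement operators preserves both positivity and the completeness relation, so the compressed realization is genuinely of the claimed dimension; these are standard density-operator manipulations, but they are precisely what converts the informal ``the larger side is wasted'' intuition into a proof. I would also stress that the argument is inherently bipartite --- it exploits the single Alice$\mid$Bob cut --- so no analogous dimensional collapse should be expected for $n\ge 3$ parties.
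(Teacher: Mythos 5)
Your proof is correct and follows essentially the same route as the paper's: reduce to a pure state and then invoke the Schmidt decomposition to compress both sides to dimension $d_{min}$. The only difference is that you carry out the pure-state reduction explicitly (purifying into an ancilla handed to the larger-dimensional party and absorbing it into that party's measurements), whereas the paper delegates this step to a cited lemma.
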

\begin{proof}
In realizing achievable quantum distributions in a specified dimension, it suffices to consider pure states as the mixedness may be included in the local measurement settings; see \citep[Lemma 1]{ConditionalMinimumHSD}.  Any pure state in ${{\mathcal{H}}^{d_A}\otimes{\mathcal{H}}^{d_B}}$ is equivalent under local unitary transformations to its Schmidt decomposition, which has at most ${d_{min}=\min\{d_A,d_B\}}$ terms~\cite{acin2000generalized,characterizingentanglement,multireview,entang.review.toth}.  Therefore, any correlation achievable with states in ${{\mathcal{H}}^{d_A}\otimes{\mathcal{H}}^{d_B}}$ may also be achieved with states in ${{\mathcal{H}}^{d_{min}}\otimes{\mathcal{H}}^{d_{min}}}$.
\end{proof}

Note that this proof relies on the fact that the Schmidt decomposition of a bipartite state can be written as a sum of at most $d_{min}$ terms.  This feature does not extend to general entanglement classes in multipartite scenarios~\cite{acin2000generalized}; in these scenarios, asymmetric-dimensional states may lead to unique sets of achievable distributions.

\section{Bounding the Minimum Classical Shared Randomness which Spans the Local Polytope}\label{sec:lambdastar}

In this section we establish some upper and lower bounds on $\abslambdastar$, as defined per \ceq{eq:minsharedrandom} in the main text. We are grateful to Matt Pusey of the Perimeter Institute for Theoretical Physics for suggesting most of these proofs. The related question of how to find an explicit classical implementation of a box while minimizing $\abslambda$ (``ontological compression") is considered in Ref. \cite{ClassicalLambdaRequirement}.

\begin{prop}\label{theo:minsharedrandomnesssufficientPusey}
$\displaystyle\abslambdastar\leq v^{m(n-1)}\,$. In particular for the \sbell{2}{2}{2} scenario \(\displaystyle\abslambdastar\leq 4\,.\)
\end{prop}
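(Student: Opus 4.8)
The plan is to use the fact that the local polytope $\mathcal{L}_\infty^{\sbell{n}{m}{v}}$ is the convex hull of its deterministic vertices — each specified by a tuple $\vec f = (f_1,\dots,f_n)$ of local response functions $f_i\colon\{0,\dots,m-1\}\to\{0,\dots,v-1\}$ — and to observe that the shared classical variable $\lambda$ only needs to \emph{coordinate} $n-1$ of the $n$ parties, since the remaining party's behavior is conditionally independent of the rest.

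First I would take an arbitrary $P\in\mathcal{L}_\infty$ and write $P=\sum_{\vec f}q(\vec f)\,D_{\vec f}$, where $D_{\vec f}[a_1\cdots a_n|x_1\cdots x_n]=\prod_i\delta[a_i=f_i(x_i)]$ and $q$ is a probability distribution over the $v^{mn}$ tuples. Then I would regroup this sum according to the restriction $\vec g:=(f_1,\dots,f_{n-1})$ to the first $n-1$ parties, of which there are exactly $v^{m(n-1)}$. Setting $\lambda:=\vec g$ and $p[\lambda]:=\sum_{f_n}q(\vec g,f_n)$ (discarding values of zero weight), one checks that conditioned on $\lambda=\vec g$ the first $n-1$ parties respond deterministically, $p[a_i|x_i\lambda]=\delta[a_i=g_i(x_i)]$, while the last party uses the induced — generally non-deterministic — distribution $p[a_n|x_n\lambda]=\sum_{f_n}(q(\vec g,f_n)/p[\lambda])\,\delta[a_n=f_n(x_n)]$. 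Because the other parties are deterministic given $\lambda$, this joint conditional factorizes exactly in the form of \cref{eq:genericclassicalbox}, so $P\in\mathcal{L}_{v^{m(n-1)}}$, giving $\abslambdastar\leq v^{m(n-1)}$; for $\sbell{2}{2}{2}$ this is $2^{2}=4$.

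The one point requiring care — though it is not a genuine obstacle — is that the reconstructed last-party behavior $p[a_n|x_n\lambda]$ need not be deterministic. This is fine: the model of \cref{eq:genericclassicalbox} places no determinism constraint on the local response distributions, and in any case a single party's conditional distribution is always realizable with private local noise alone, since $p[a|x]=\sum_{f_n}\bigl(\prod_{x'}p[f_n(x')|x']\bigr)\,\delta[a=f_n(x)]$ exhibits it as a mixture of deterministic single-party strategies. Thus, once the $n-1$ coordinated parties are pinned down by $\lambda$, the conditionally-independent remainder costs nothing extra in shared randomness. I would also note that this grouping argument makes no attempt at optimality, so the bound $v^{m(n-1)}$ is generally loose; the exact value for $\sbell{2}{2}{2}$ turns out to equal this upper bound $4$, with the matching lower bound established separately in \cref{sec:lambdastar}.
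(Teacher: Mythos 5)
Your proposal is correct and follows essentially the same route as the paper: the paper's proof likewise lets the shared variable $\lambda$ enumerate the $v^{m(n-1)}$ deterministic strategies of $n-1$ parties while the remaining party responds probabilistically on $\lambda$. Your write-up simply makes explicit the regrouping of the Fine-theorem decomposition that the paper leaves implicit, and your observation that the last party's conditional distribution need not be deterministic is exactly the point the paper's ``depending probabilistically on $\lambda$'' is meant to capture.
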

\begin{proof}Let $n-1$ parties have all their measurements depend deterministically on $\lambda$, and only the $n$'th party depending probabilistically on $\lambda$. There are precisely $v^{m(n-1)}$ possible deterministic distributions among the $n-1$ parties, so without loss of generality is suffices to take $\abslambda = v^{m(n-1)}\,$. In this manner every conceivable classical correlation is possible.
\end{proof}

\begin{prop}\label{theo:minsharedrandomnesssufficientFenchel}
$\displaystyle\abslambdastar \leq {\left(m(v-1)+1\right)}^{n}-1\,$.
\end{prop}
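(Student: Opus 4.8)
The plan is to observe that the local polytope is the convex hull of the (connected) set of product distributions, and then to apply Fenchel's theorem exactly as in the proof of \cref{theo:FenchelCountnew}, only specialized to $d=1$. Recall from the proof of \cref{lem:Q1notconvex} that $\mathcal{Q}_1$ is precisely the set of product distributions $p[ab\ldots|xy\ldots]=p[a|x]p[b|y]\cdots$. Since product distributions are manifestly local and $\mathcal{L}_\infty=\CH{\mathcal{L}_1}$ with $\mathcal{L}_1\subseteq\mathcal{Q}_1\subseteq\mathcal{L}_\infty$, monotonicity of the convex hull gives $\mathcal{L}_\infty=\CH{\mathcal{Q}_1}$.

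The next step is to check that $\mathcal{Q}_1$, regarded inside the $\mathcal{F}$-dimensional statistical space of no-signalling boxes with $\mathcal{F}$ as in \cref{eq:dimcount}, is a closed, pathwise-connected compactum. This is the $d=1$ instance of the observation used in the proof of \cref{theo:FenchelCountnew}: $\mathcal{Q}_1$ is the continuous image of a Cartesian product of probability simplices (one simplex of response functions $p[a|x]$ per party), hence compact, and pathwise-connected because that parameter domain is convex. Fenchel's theorem then ensures that every box in $\CH{\mathcal{Q}_1}=\mathcal{L}_\infty$ is a convex mixture of at most $\mathcal{F}$ product distributions.

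Finally I would translate back into the shared-randomness language: a convex combination $\sum_{\lambda}c_\lambda\,p[a|x\lambda]p[b|y\lambda]\cdots$ of $\mathcal{F}$ product distributions is, by definition (\cref{eq:genericclassicalbox}, where the conditionals $p[\cdot|\cdot\lambda]$ need not be deterministic), a box in $\mathcal{L}_{\mathcal{F}}$. Hence $\mathcal{L}_\infty=\mathcal{L}_{\mathcal{F}}$, and by the defining property of $\abslambdastar$ in \cref{eq:minsharedrandom} this gives $\abslambdastar\leq\mathcal{F}={\left(m(v-1)+1\right)}^n-1$. Equivalently, since $\mathcal{L}_{\abslambda}+\mathcal{Q}_1=\mathcal{L}_{\abslambda}$ and $\CH{\mathcal{Q}_1}=\mathcal{L}_\infty$, this bound is just \cref{theo:FenchelCountnew} evaluated at $d=1$.

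I do not expect a genuine obstacle here. The only point requiring care is that one must generate $\mathcal{L}_\infty$ from the connected set $\mathcal{Q}_1$ rather than from the finite vertex set $\mathcal{L}_1$: applying ordinary Carath\'eodory directly to the $v^{mn}$ deterministic boxes would give only the weaker bound $\abslambdastar\leq\mathcal{F}+1$, and it is the connectedness of $\mathcal{Q}_1$ that lets Fenchel's theorem shave off the extra vertex.
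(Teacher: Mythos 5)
Your argument is correct and is essentially the paper's own proof: the paper applies Fenchel's theorem directly to $\mathcal{L}_1$, which by \cref{eq:genericclassicalbox} is already the full connected continuum of (possibly noisy) product distributions and hence coincides with the set you call $\mathcal{Q}_1$, so your detour through $\CH{\mathcal{Q}_1}=\mathcal{L}_\infty$ is the same step under a different name. Your closing caveat slightly misattributes the finite set of $v^{mn}$ deterministic boxes to $\mathcal{L}_1$, but the substantive point --- that Fenchel's strengthening of Carath\'eodory on a pathwise-connected compactum is what removes the $+1$ --- matches the paper exactly.
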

\begin{proof}
The proof is analogous to the proof of \cref{theo:FenchelCountnew}. $\mathcal{L}_1$ is a pathwise-connected closed compactum of dimension $\mathcal{F}^{\sbell{n}{m}{v}}={\left(m(v-1)+1\right)}^{n}-1$ per \cref{eq:dimcount}. Fenchel's theorem \cite{Hanner1951,Brny2012} then implies that the Carathéodory number of any box $\mathcal{L}_1$ is less than or equal to $\mathcal{F}$; i.e. any $P\in\CH{\mathcal{L}_{1}}$ can be decomposed into a convex combination of at most $\mathcal{F}$ boxes from $\mathcal{L}_1$. $\abslambdastar$ is identically that Carathéodory of  $\mathcal{L}_1$.
\end{proof}

%$\mathcal{L}_1\subseteq\mathcal{NS}$ 
%\noindent\rule{\textwidth}{1pt}

\begin{prop}\label{theo:maxsharedrandomnesssufficientXOR}
\(\displaystyle\abslambdastar\geq {\left(m(v-1)+1\right)}^{n-1}+\sum_{k=2}^n\sum_{j=2}^k \binom{n-1}{k-1}\binom{m}{j}S_2[k,j](j-1)!{(v-1)}^k\,,\) where $S_2$ refers to Stirling number of the second kind \cite{StirlingS2Mathworld}. For $n=2$, therefore, \(\displaystyle\,\abslambdastar\geq {\frac{m(m-1)(v-1)^2}{2}+m(v-1)+1}\,\) and in particular for the \sbell{2}{2}{2} scenario \(\displaystyle\abslambdastar\geq 4\,.\)
\end{prop}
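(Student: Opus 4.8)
The plan is to certify the bound by exhibiting a single local box $P^{\star}\in\mathcal{L}_{\infty}^{\sbell{n}{m}{v}}$ whose every decomposition into product boxes requires at least the claimed number $N$ of terms; since $\mathcal{L}_{\abslambda}$ consists exactly of mixtures of $\abslambda$ product distributions per \cref{eq:genericclassicalbox}, this immediately gives $\abslambdastar\geq N$. The witness $P^{\star}$ would be of ``XOR type'': on a carefully chosen family of input tuples it enforces perfect relations of the form $a+b+\cdots\equiv\text{const}\pmod v$ among designated subsets of the parties, while leaving the remaining input tuples maximally mixed. Because the enforced relations correspond to vanishing entries of $P^{\star}$, and all terms of a convex decomposition are nonnegative, each product term must also vanish on those entries; a short case analysis then shows each term is forced to respond \emph{deterministically} on the constrained inputs and, moreover, to lock together the outputs of the correlated parties, so that each term in a decomposition of $P^{\star}$ carries a discrete ``type.''

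First I would treat the bipartite case $n=2$, where the bookkeeping is cleanest and the answer must come out to $m(v-1)+1+\binom{m}{2}(v-1)^2$. Here the type of a product term records its (deterministic, synchronized) output on each input, and matching the prescribed marginals of $P^{\star}$ on \emph{every} pair of distinct inputs turns into a system of equalities the multiset of types must satisfy. I would then argue that any feasible multiset of types is forced to be affinely spanning for an auxiliary configuration of dimension $m(v-1)+\binom{m}{2}(v-1)^2$, hence must contain at least one more type than that dimension; equivalently, $P^{\star}$ lies in the relative interior of a simplicial section of the local polytope of that dimension. Specializing to $v=2$ gives $\abslambdastar\geq 4$ for \sbell{2}{2}{2}, which together with the upper bound $\abslambdastar\leq 4$ of \cref{theo:minsharedrandomnesssufficientPusey} pins the value down exactly, as asserted.

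For general $n$ I would build $P^{\star}$ so that $n-1$ of the parties carry an \emph{arbitrary} $(n{-}1)$-party deterministic pattern, contributing the ${\left(m(v-1)+1\right)}^{n-1}$ term (the statistical-space dimension $\mathcal{F}$ of \cref{eq:dimcount} for $n-1$ parties, plus one), while party $n$ is tied to the others through the XOR relations. Resolving those relations requires, for each way of selecting $k-1$ further parties to join party $n$ (hence $k$ in all), choosing $j$ of the $m$ inputs, partitioning the $k$ parties into $j$ nonempty blocks, and assigning outputs, one additional type; counting these configurations produces exactly the factors $\binom{n-1}{k-1}\binom{m}{j}S_2[k,j](j-1)!\,(v-1)^k$, the Stirling number $S_2[k,j]$ counting the block partitions and $(j-1)!$ the cyclic arrangements distinguished by the mod-$v$ relations. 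Summing over $k$ and $j$ yields $N$.

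The main obstacle is getting this combinatorics exactly right: the witness $P^{\star}$ must be engineered so that the forced structure is \emph{neither stronger nor weaker} than the claimed count — a naive ``diagonal on matching inputs, uniform elsewhere'' box, for instance, only forces a strength-two covering array and gives a logarithmically weak bound — and one must then prove that the induced covering/partition problem has minimum size precisely the closed form above, i.e. verify the relevant Stirling-number identity. A secondary difficulty is ruling out ``shortcut'' decompositions that exploit genuinely non-deterministic product boxes on the unconstrained inputs, and confirming that $P^{\star}$ really is inside $\mathcal{L}_{\infty}$ (equivalently, that the imposed XOR relations are jointly winnable with shared randomness). An alternative to the explicit-construction route is LP duality: produce $N$ affinely independent Bell-type functionals placing $P^{\star}$ on a low-dimensional simplicial face of the local polytope; this repackages the same count but may make the dimension computation more transparent.
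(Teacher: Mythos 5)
Your plan is essentially the paper's own argument: the paper's witness is exactly the ``as random as possible subject to $a+b+c+\cdots\equiv 0 \pmod v$ on aligned inputs'' box, whose perfect correlations force every product term of any decomposition to respond deterministically to $\lambda$, after which the bound is the number of independently tunable degrees of freedom of the constrained family plus one, counted by precisely the $\binom{n-1}{k-1}\binom{m}{j}S_2[k,j](j-1)!\,(v-1)^k$ bookkeeping you describe (with $(j-1)!$ arising from fixing the last party's block to the largest chosen input rather than from cyclic arrangements). The only substantive difference is framing: the paper argues over the whole constrained family --- a finite union of simplices of dimension at most $\abslambda-1$ spanned by deterministic types cannot cover a set of strictly larger dimension --- rather than via a single witness in the relative interior of a ``simplicial'' section (the section is generally not a simplex), which cleanly sidesteps the non-spanning-decomposition worry you flag as your main obstacle.
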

\begin{proof}
Consider a box which is as random as possible while still satisfying \begin{align}
p[{a\mathpunct{+}b\mathpunct{+}c...=0 \!\!\!\mathopen{\mod v}}\mid xyz]=1 \quad \text{iff }\quad x=y=z...
\end{align}
i.e. the outputs of the parties always (modularly) sum to zero when the inputs are aligned, but this perfect correlation is not detected whenever the inputs are not all aligned. One can then, given $n-1$ of the outputs, perfectly determine the remaining one as $a_x=-b_x-c_x... \!\!\!\mathopen{\mod v}$. The instances of perfect correlation enforce that there is no local noise, and thus that every party's output depends \emph{deterministically} on the random variable.  The degrees of freedom in the outputs of the first $n-1$ parties, given by \ceq{eq:dimcount} as ${\left(m(v-1)+1\right)}^{n-1}-1$, are all able to be set independently and yet must be determined solely by the shared random variable. We must give the shared variable an alphabet size equal to the number of degrees of freedom \emph{plus one}, to account for normalization.  Thus, $\abslambdastar\geq{\left(m(v-1)+1\right)}^{n-1}$.

This loose lower bound can be strengthened to the expression in \cref{theo:maxsharedrandomnesssufficientXOR} by counting the remaining degrees of freedom which include the last party's outcome.  There are $\binom{n-1}{k-1}$ ways to choose a $k$-partite context involving the last party. We then need to assign inputs to the parties; as the case where all inputs are equal is already specified by the definition of the box, we only need to consider distributing $j>1$ distinct inputs among the $k$ parties. There are $\binom{m}{j}$ ways to choose which $j$ distinct inputs to distribute. The Stirling number $S_2[k,j]$ gives the number of ways $k$ objects can be divided into $j$ (indistinguishable) partitions. Each partition is assigned one input however, so we must consider the various permutations of mapping inputs to partitions. To avoid over-specifying the distribution we take the last party (and any other parties in the partition including the last party) to be associated with largest of the $j$ selected inputs. Thus we only consider $(j-1)!$ permutations of how the smaller inputs can be assigned to the remaining partitions. Finally we specify the outputs, avoiding output $0$ per the parameterization scheme discussed prior to \cref{eq:dimcount}. Thus there are $v-1$ distinct outputs considered for each of the $k$ parties considered.
\end{proof}

For pedagogical clarity we demonstrate how to obtain \(\abslambdastar\geq 7\) in this fashion for the \sbell{2}{3}{2} scenario, famous for the Bell inequality $I_{3322}$ \cite{I3322NPA2,InfiniteDimensionViolation,DimensionDependantBellViolation3}. The classical box in \sbell{2}{3}{2} which satisfies $b_x\mathopen{=}a_x$ has six degrees of freedom, namely ${p[a_0\mathopen{=}1]}$, ${p[a_1\mathopen{=}1]}$, ${p[a_2\mathopen{=}1]}$, ${p[a_1\mathopen{=}1,b_1\mathopen{=}1]}$, ${p[a_0\mathopen{=}1,b_2\mathopen{=}1]}$, and ${p[a_1\mathopen{=}1,b_2\mathopen{=}1]}$. Trivially ${p[a_x\mathopen{=}0]}=1-{p[a_x\mathopen{=}1]}$ and ${p[a_x\mathopen{=}1,b_y\mathopen{=}0]}={p[a_x\mathopen{=}1]}-{p[a_x\mathopen{=}1,b_y\mathopen{=}1]}$ etc. Note that $p[b_y\mathopen{=}i]$ is fixed as equal to $p[a_y\mathopen{=}i]$. Furthermore, $p[a_x\mathopen{=}i,b_y\mathopen{=}j]=p[a_y\mathopen{=}j,b_x\mathopen{=}i]$, so we only ever enumerate $p[a_x\mathopen{=}i,b_y\mathopen{=}j]$ where $x<y$. Adding one to the six degrees of freedom yields \(\abslambdastar\geq 7\).

%\clearpage

\section{A Complimentary Fundamental Axiom}\label{sec:compaxiom}

In \cref{theo:LinsideQ}, it was noted that the set of boxes achievable with classical shared randomness of dimension $\abslambda$ is achievable with quantum states of dimension $d=\abslambda$, and furthermore, realizable with separable quantum states.  Here, we briefly note that while the inverse is not true, it \emph{is} true that any box achievable with separable quantum states of dimension $d$ can be achieved with classical shared randomness of finite dimension.

\begin{axiom}\label{theo:QinsideL}
${{(\mathcal{Q}:\mbox{sep})}_{\mathlarger d}\subseteq\mathcal{L}_{\abslambda\geq \mathlarger{d^n}}},$ i.e. all quantum correlations generated by separable states are also classically achievable given enough shared randomness, where enough means $\abslambda\geq \mathlarger{d^n}$.
\end{axiom}
\begin{cor}
Entanglement is required for non-locality.
\end{cor}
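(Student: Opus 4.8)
The plan is to obtain the corollary as the contrapositive of the assertion that \emph{every} box arising from measurements on a separable (unentangled) quantum state is local. Concretely, suppose $P$ is a box of the form \cref{eq:genericquantumbox} in which the shared state $\rho$ is separable across the $n$ parties and has local Hilbert space dimension $d$. Then by definition $P\in{(\mathcal{Q}:\mbox{sep})}_{\mathlarger{d}}$, and \cref{theo:QinsideL} immediately gives $P\in\mathcal{L}_{\abslambda\geq d^n}\subseteq\mathcal{L}_{\infty}$. Hence $P$ lies in the local polytope and in particular violates no Bell inequality. Taking the contrapositive: a nonlocal box -- one outside $\mathcal{L}_{\infty}$, equivalently one violating some Bell inequality -- can never be realized by a separable state, so entanglement is a necessary resource for nonlocality.

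The single point that requires care is that \cref{theo:QinsideL} is phrased for a fixed finite local dimension $d$, whereas a priori a separable state could act on an infinite-dimensional Hilbert space or be a continuous mixture of product states. This is handled by noting that any separable $\rho$ is a (possibly continuous) convex mixture $\rho=\int d\mu(\gamma)\,\rho^{(\gamma)}_{1}\otimes\cdots\otimes\rho^{(\gamma)}_{n}$ of product states; each product term produces, via \cref{eq:genericquantumbox}, a product box in $\mathcal{L}_{1}$ (cf.\ the argument of \cref{lem:Q1notconvex}), so the resulting $P$ is a limit of finite convex combinations of boxes from $\mathcal{L}_{1}$. Since the local polytope is closed and $\mathcal{L}_{\infty}=\CH{\mathcal{L}_{1}}$ by Fine's theorem, we again conclude $P\in\mathcal{L}_{\infty}$. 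Equivalently, writing ${(\mathcal{Q}:\mbox{sep})}_{\infty}\equiv\bigcup_{d}{(\mathcal{Q}:\mbox{sep})}_{\mathlarger{d}}$, \cref{theo:QinsideL} yields ${(\mathcal{Q}:\mbox{sep})}_{\infty}\subseteq\mathcal{L}_{\infty}$ outright.

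The ``main obstacle'' is thus not really an obstacle: the substance of the corollary is entirely carried by \cref{theo:QinsideL}, and the only subtlety -- passing from bounded to unbounded or continuously-mixed separability -- is dispatched by the compactness (closedness) of the local polytope together with Fine's theorem. No estimates beyond these are needed.
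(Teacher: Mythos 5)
Your proposal is correct and takes essentially the same route as the paper: the corollary is obtained as the contrapositive of \cref{theo:QinsideL}, i.e.\ of the fact that any separable-state box admits the explicit LHV decomposition $\lambda\to i$, $p[\lambda]\to c^{(i)}$, $p[a|x\lambda]\to\Tr{\rho_A^{(i)}\,\hat{A}_{a|x}}$, placing it in $\mathcal{L}_{\abslambda\geq d^n}\subseteq\mathcal{L}_{\infty}$. Your additional remark handling infinite-dimensional or continuously-mixed separable states via closedness of the local polytope is a sensible extra precaution the paper does not spell out, but it does not change the argument.
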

\begin{proof}
Recall that by definition separable states can be written as $\rho_{\text{sep}}=\sum_{i}{c^{(i)} \rho_A^{(i)} \otimes  \rho_B^{(i)}...}\,$, and therefore satisfy \(\Tr{\rho_{\text{sep}}\,\left(\hat{A}_{a|x}\otimes\hat{B}_{b|y}\,...\right)}=
 \sum_{i=1}^{i_\text{max}}\left({c^{(i)} \Tr{\rho_A^{(i)} \, \hat{A}_{a|x} }\Tr{\rho_B^{(i)} \, \hat{B}_{b|y} }...}\right)\,
\). Any separable state with local Hilbert space dimension $d$ can be decomposed into a mixture of no-more-than $d^n$ product states [\citealp{NSepStateMixing} Def. 6, \citealp{SeparableEnsembleLength} Thm. 2] so we can replace $i_\text{max}$ with $d^n$ without loss of generality. One can therefore map every $P\in \mathcal{(\mathcal{Q}:\mbox{sep})}_{d}$ to a $P^{\prime}\in \mathcal{L}_{\abslambda= \mathlarger{d^n}}$ by the following construction on \ceq{eq:genericclassicalbox}: $\lambda\to i$, $p[\lambda]\to c^{(i)}$, $p[a|x\lambda]\to\Tr{\rho_A^{(i)} \, \hat{A}_{a|x}}$, etc.
\end{proof}

\citet{ScaraniUnpublished} have shown that merely $\abslambda\geq d$ is \emph{not} always sufficient to classically simulate the correlations which result from separable states, i.e. that separable states can still manifest super-locality. In particular, they show that the box $p[ab|xy]=\frac{2+\left(-1\right)^{a+b+{x y}}\sqrt{2}}{8}$ requires $\abslambda>2$, but can nevertheless be achieved by measurements on separable qubits.

\section{Details of the Parameterization of Shared-Qubits Boxes in the \sbell{2}{2}{2} Scenario}\label{sec:params}

We consider qubit-based boxes with explicit representations in terms of two-qubit states of arbitrary entanglement and general two-outcome POVMs. We need to consider exclusively pure states, per Ref. \citep[Lemma 1]{ConditionalMinimumHSD}.  As all pure states are equivalent to their Schmidt-decomposed form under local unitary transformations~\cite{acin2000generalized,characterizingentanglement,multireview,entang.review.toth}, it is thus sufficient to consider the state \begin{align}
\ket{\psi}=\cos(\tfrac{\alpha}{2})\ket{00}+\sin(\tfrac{\alpha}{2})\ket{11}\,,\quad\alpha\in(0,\pi),
\label{genstate}\end{align} by folding the local degrees of freedom into the measurement operators.

While POVMs can be regarded as projective measurements in a larger Hilbert space, restricting the dimensionality of the quantum systems necessitates the use of general POVMs~\cite{VBInequality,ConcavityPaperVertesi}.  Adapting the notation of~\cite{HeinosaariPOVM}, we express a general binary 0/1 outcome POVM element as \begin{align}\label{genmeas}
\hat{A}_{a|x}&=\frac{1}{2}\left[\left(1+(-1)^{1-a}\kappa_{A_x}\right)\,\id+(-1)^{1-a}\eta_{A_x}\left(\vec{n}_{A_x}\cdot\vec{\sigma}\right) \right],\end{align} where $\vec{\sigma}=(\hat{\sigma}_X,\hat{\sigma}_Y,\hat{\sigma}_Z)$ is a vector of Pauli matrices and $\vec{n}_{A_x}=(\sin\theta_{A_x}\cos\phi_{A_x},\sin\theta_{A_x}\sin\phi_{A_x},\cos\theta_{A_x})$ is a unit vector defining a direction in the Bloch sphere in spherical coordinates. Bob's POVM elements are defined similarly in a separate Hilbert space.  Technically, the measurement operators of the distinct parties need only commute with each other to form a genuine quantum multipartite implementation; we relegate each party to a distinct Hilbert space for convenience to ensure appropriate commutativity, which has apparently no loss of generality \cite{TsirelsonProblemArXiv,ConnesEmbedding,PhysicalTsirelson,KirchbergConjecture}.  To ensure positivity of the POVM elements corresponding to both outputs, the following conditions must be met:
\begin{align}
\forall_{A_x}:\eta_{A_x}-1\leq\kappa_{A_x}\leq1-\eta_{A_x}.\label{genmeasconds}
\end{align}
\noindent Note that \ceq{genmeasconds} implies $0\leq\eta_{k_x}\leq1$. If both bounds in \ceq{genmeasconds} are simultaneously saturated then $\kappa_{k_x}=0$ and $\eta_{k_x}=1$, and \ceq{genmeas} represents a projection-valued measurement (PVM).

As we are concerned with two-outcome POVMs, it is conventional to parameterize boxes in terms of bias of the measurement outcome, i.e. \begin{align}\Braket{A_x}=&\langle\hat{A}_{1|x}\rangle-\langle\hat{A}_{0|x}\rangle\\ \begin{split}\Braket{A_xB_y}=&\langle\hat{A}_{1|x}\hat{B}_{1|y}\rangle-\langle\hat{A}_{0|x}\hat{B}_{1|y}\rangle\\&- \langle\hat{A}_{1|x}\hat{B}_{0|y}\rangle+\langle\hat{A}_{0|x}\hat{B}_{0|y}\rangle.\end{split}\end{align}  The four marginal and four joint biases (for all $x$ and $y$ options) parameterize the eight-dimensional conditional probability space for the \sbell{2}{2}{2} scenario, and relate to the 17 ``backend'' parameters of the state and measurements as
\begin{align}\label{eq:POVMmarginal}
\Braket{A_x}\!=&\eta_{A_x}\cos(\alpha)\cos(\theta_{A_x})+\kappa_{A_x}\,\text{, and}
\\\nonumber
\Braket{A_x B_y}\!=&\eta_{A_x}\eta_{B_y}\cos(\phi_{A_x}\!+\phi_{B_y})\sin(\alpha)\sin(\theta_{A_x})\sin(\theta_{B_y})\\&+\eta_{A_x}\eta_{B_y}\cos(\theta_{A_x})\cos(\theta_{B_y})\nonumber\\&+\eta_{A_x}\kappa_{B_y}\cos(\alpha)\cos(\theta_{A_x})\label{eq:POVMparity}\\&+\eta_{B_y}\kappa_{A_x}\cos(\alpha)\cos(\theta_{B_y})+\kappa_{A_x}\kappa_{B_y}\,.\nonumber
\end{align}

%\clearpage

\section{Convexly Combining Quantum Boxes via Direct Sum of Hilbert Spaces}\label{sec:convexification}

Suppose we wish to take the convex combination of $N$ qudit-based multipartite boxes, i.e.
\begin{align}\label{eq:convexcomboP}
\widetilde{P}=\sum_{i=0}^{N-1}{c_{i}P_{i}}\quad\text{where}\quad P_{i}\in\mathcal{Q}_{d}\,.
\end{align}
Typically\footnote{The major exception is when $\mathcal{Q}_{d}$ is convex, as then $\widetilde{P}\in\mathcal{Q}_{d}$ by definition. Convexity might be ``rare", however.\label{foot:Q16sufficient}}, $\widetilde{P}\not\in\mathcal{Q}_{d}\,$. This can be built as follows:

Index the local measurement operators of each of the $N$ boxes being combined into $\widetilde{P}$ by $\hat{A}^{(i)}_{a|x},\hat{B}^{(i)}_{b|y}$ etc. Now imagine -- although entirely unjustified -- that all the $N$ boxes $P_{i}$ are predicated on sharing the \emph{same} composite quantum state, i.e. $\forall_i\;\;\rho_i=\rho_0$. In order to reproduce the marginal probabilities of $\widetilde{P}$ with a single quantum box (without supplementary shared randomness), we should take $\widetilde{\rho}=\rho_0$ and define the new measurement operators $\widetilde{\hat{A}_{a|x}}=\sum_{i=0}^{N-1}c_i\,\hat{A}^{(i)}_{a|x}$. This satisfies the requirement that $\widetilde{p}[a|x]=\sum_{i=0}^{N-1}c_i\, {p_i}[a|x]$ etc.

Unfortunately, choosing measurement operators to satisfy the single-partite marginal probabilities does not extend to satisfaction of the required bipartite joint probabilities. Following \ceq{eq:genericquantumbox} we find that
\begin{align}\begin{split}
p[ab|xy]&=\Tr{\widetilde{\rho} \, \widetilde{\hat{A}_{a|x}} \otimes \widetilde{\hat{B}_{b|y}}}\\&
=\sum_{i=0}^{N-1}\sum_{j=0}^{N-1}c_i\, c_j \Tr{\rho_0 \, \hat{A}^{(i)}_{a|x} \otimes \hat{B}^{(j)}_{b|y}}
\end{split}\end{align}
which does not remotely match
$\widetilde{p}[ab|xy]
{=\sum_{i}^{}{c_i\,}{p_i}[ab|xy]}
{=
\sum_{i}^{}{c_i}\Tr{\rho_0 \, \hat{A}^{(i)}_{a|x} \otimes \hat{B}^{(i)}_{b|y}}}$.  Therefore quantum boxes predicated on different sets of local measurement operators cannot be combined without increasing the Hilbert space dimension\footnote{Suppose we artificially consider global measurements of the form $\widehat{A'' B''}=\frac{\hat{A}\otimes\hat{B}+\hat{A'}\otimes\hat{B'}}{2}$. Such not-local-but-still-separable measurements lack physical meaning, but they can be interpreted as representing the convex combination of boxes acting on the same shared state with different local, i.e. product state, measurements. Such artificial global measurements are known as \emph{separable superoperators}, and have been studied elsewhere in the context of state discrimination \cite{SeparableMeasurementsNathanson,SeparableMeasurementsRusso}.}; as per \cref{theo:LQinQ}, however, $\widetilde{P}\in\mathcal{Q}_{d\times N}$.

Imagine that each $P_i$ is implemented in ${\mathcal{H}}^{(i)}={\left(\mathbb{C}^{d}\right)}^{\otimes n}$ using the (distinct!) quantum states $\rho^{(i)}$ and local measurement operators $\hat{A}^{(i)}_x$, $\hat{B}^{(i)}_y$, etc. Then, to implement $\widetilde{P}$ using a single quantum box, one may take the direct sum of the component Hilbert spaces such that
\begin{align}\label{eq:directsumH}
%\hspace{-\mathindent}
\widetilde{\mathcal{H}}=\bigoplus\limits_{i=0}^{N-1}{\mathcal{H}}^{(i)}\,,\quad \widetilde{\rho}=\bigoplus\limits_{i=0}^{N-1}{c_i\,}{\rho_i}\,,\quad\widetilde{\hat{A}_{a|x}}=\bigoplus\limits_{i=0}^{N-1}\hat{A}^{(i)}_{a|x},
\end{align}
and so on.

To explain how \ceq{eq:directsumH} automatically satisfies \ceq{eq:convexcomboP}, and why the new local Hilbert space dimension in \ceq{eq:directsumH} can be thought of as $d\times N$, consider how $\widetilde{P}$ is implemented in ${\big({\left(\mathbb{C}^{d}\right)}^{\otimes N}\big)}^{\otimes n}$. The idea is to assign an ancilla $\mathbb{C}^{N}$ to every party, and to make the new shared state simultaneously diagonalized in all the ancillae. Thus,
\begin{align}\begin{split}\label{eq:highdimconstruction}
\widetilde{\rho}&=\sum\limits_{i=0}^{N-1}{{c_i\,}\rho_i\otimes \dyad{i}{i}_A \otimes \dyad{i}{i}_B}\otimes...\,,
\\ \widetilde{\hat{A}_{a|x}} &= \sum\limits_{i=0}^{N-1}{\hat{A}^{(i)}_{a|x}\otimes \dyad{i}{i}_A },\\\widetilde{\hat{B}_{b|y}} &= \sum\limits_{j=0}^{N-1}{\hat{B}^{(j)}_{b|y} \otimes \dyad{j}{j}_B }\,,
\end{split}\end{align}
and so on. Indeed, \ceq{eq:highdimconstruction} amounts to the \emph{definition} of the direct sum in \ceq{eq:directsumH}. Thus the parties' new measurement operators now live in distinct Hilbert spaces $N$ times larger than originally, and the new shared state is a cq-state \cite{CQstate2008,DimensionWitness2008Wehner}, namely a block-diagonal composition of the component original states in the convex combination per \ceq{eq:convexcomboP}.

What we find is that \emph{direct summation of Hilbert spaces is identically convexification} in the sense of \ceq{eq:generichybridbox}. Since ${\bigoplus^N_{}\bigotimes^n_{}\mathbb{C}^{d} \subset \bigotimes^N_{}\bigotimes^n_{}\mathbb{C}^{d}}$, we have proven that  ${\mathcal{L}_{N}+\mathcal{Q}_{d}\subseteq\mathcal{Q}_{d\times N}}$.

\clearpage

\setlength{\bibsep}{1pt plus 30pt minus 2pt}
\bibliographystyle{apsrev4-1}
\nocite{apsrev41Control}
\bibliography{apsrevcontrol,nonconvexityrefs}

\end{document}